\def\blfootnote{\xdef\@thefnmark{}\@footnotetext} 
\long\def\symbolfootnote[#1]#2{\begingroup%
\def\thefootnote{\fnsymbol{footnote}}\footnote[#1]{#2}\endgroup}
\newcommand{\notx}{N_t}
\newcommand{\txind}{t}
\newcommand{\norx}{N_r}
\newcommand{\rxind}{r}
\newcommand{\fb}{k}
\newcommand{\fbvec}{\boldsymbol{\fb}}
\newcommand{\prfb}{q}
\newcommand{\round}{\ell}
\newcommand{\noround}{L}
\newcommand{\block}{b}
\newcommand{\noblock}{B}
\newcommand{\Blength}{J}
\newcommand{\cuind}{j}
\newcommand{\conste}{\mathcal{X}}
\newcommand{\consize}{M}
\newcommand{\rate}{R}
\newcommand{\chmat}{\boldsymbol{H}}
\newcommand{\acchmat}[1]{\chmat_{\overline{1, #1}}}
\newcommand{\rxsig}{\boldsymbol{Y}}
\newcommand{\acrxsig}[1]{\rxsig_{\overline{1, #1}}}
\newcommand{\txsig}{\boldsymbol{X}}
\newcommand{\actxsig}[1]{\txsig_{\overline{1, #1}}}
\newcommand{\nosig}{\boldsymbol{W}}
\newcommand{\acnosig}[1]{\nosig_{\overline{1, #1}}}
\newcommand{\info}{I}
\newcommand{\infoga}{\info^{\rm ga}}
\newcommand{\stairthre}{\hat{\info}}
\newcommand{\IX}[1]{\info_{\conste}\left(#1\right)}
\newcommand{\acinfo}[1]{\info_{\overline{1, #1}}}
\newcommand{\fbthre}{\overline{\info}}
\newcommand{\complex}{\mathbb{C}}
\newcommand{\diag}{{\rm diag}}
\newcommand{\txpow}{P}
\newcommand{\bartxpow}{\hat{\txpow}}
\newcommand{\code}{\mathcal{C}}
\newcommand{\expectation}[2]{\mathbb{E}_{#1}\left[#2\right]}
\newcommand{\nolevel}{K}
\newcommand{\pout}{p}
\newcommand{\upout}{\hat{\pout}}
\newcommand{\myPr}[1]{\Pr\left\{#1\right\}}
\newcommand{\diver}{d}
\newcommand{\Kdiverint}{\delta}
\newcommand{\prober}{P_e}
\newcommand{\rprober}{\prober^{(r)}}
\newcommand{\PEP}{P_{\rm PEP}}
\newcommand{\rdiver}{\diver^{\ddag}}
\newcommand{\tdiver}{\diver^{\dag}}
\newcommand{\achievediver}{\diver^{(r)}}
\newcommand{\diveronebit}{\hat{\diver}}
\newcommand{\divershort}{\underline{\diver}}
\newcommand{\stairrate}{\tau}
\newcommand{\power}{P}
\newcommand{\stair}{t}
\newcommand{\dist}{g}
\newcommand{\setopenr}[1]{\bigl[#1\bigr)}
\newcommand{\setopen}[1]{\left(#1\right)}
\newcommand{\goodset}{\mathcal{S}^{(\epsilon)}}
\newcommand{\openone}{\leavevmode\hbox{\small1\normalsize\kern-.33em1}}
\newcommand{\roundrun}{l}
\newcommand{\ifbthre}[1]{\info^{\dag}\left(#1\right)}
\newcommand{\inolevel}{\overline{\nolevel}}
\newcommand{\fbthresetr}[1]{\mathcal{A}_{\fbvec_{#1}}}
\newcommand{\fbthresetl}[1]{\overline{\mathcal{A}}_{\fbvec_{#1}}}
\newcommand{\indexset}[1]{\mathcal{S}_{\fbvec_{#1}}}
\newtheorem{theorem}{Theorem}
\newtheorem{remark}{Remark}
\title{MIMO ARQ with Multi-bit Feedback: Outage Analysis}
\author{Khoa D. Nguyen, Lars K. Rasmussen, Albert Guill\'{e}n i F\`{a}bregas \\and Nick Letzepis
\thanks{Khoa D. Nguyen and Nick Letzepis are with the Institute for Telecommunications Research, University of South Australia, Australia. {\tt email: dangkhoa.nguyen@postgrads.unisa.edu.au; nick.letzepis@unisa.edu.au}.}
\thanks{Lars K. Rasmussen was with the Institute for Telecommunications Research, University of South Australia, Australia. He is now with the Communication Theory Laboratory, Royal Institute of Technology, and the ACCESS Linneaus Centre, Stockholm, Sweden. {\tt email: lars.rasmussen@ee.kth.se}.}
\thanks{Albert Guill\'{e}n i F\`{a}bregas is with the Department of Engineering, University of Cambridge, Cambridge, UK. {\tt email: albert.guillen@eng.cam.ac.uk}.}
\thanks{This work has been presented in part at the Australian Communication Theory Workshop, Sydney, Australia,  4--7 Feb. 2009 and will be presented in part at the 2009 IEEE International Symposium on Information Theory, Seoul, South Korea, 28 June--3 July 2009.}
\thanks{This work was supported by the Australian Research Council under ARC grants RN0459498, DP0558861 and DP0881160.}}
\begin{document}
\maketitle
\begin{abstract}
We study the asymptotic outage performance of incremental redundancy automatic repeat request (INR-ARQ) transmission over the  multiple-input multiple-output (MIMO) block-fading channels with discrete input constellations.  We first show that transmission with random codes using  a discrete signal constellation across all transmit antennas achieves the optimal outage diversity given by the Singleton bound.  We then analyze the optimal SNR-exponent and outage diversity of INR-ARQ transmission over the MIMO block-fading channel.   We show that a significant gain in outage diversity is obtained by providing more than one bit feedback at each ARQ round.  Thus, the outage performance of INR-ARQ transmission can be remarkably improved with minimal additional overhead.  A suboptimal feedback and power adaptation rule, which achieves the optimal outage diversity, is proposed for MIMO INR-ARQ, demonstrating the benefits provided by multi-bit feedback. 
\end{abstract}
\newpage
\section{Introduction}
The block-fading channel \cite{OzarowShamaiWyner1994,BiglieriProakisShamai1998} is a useful mathematical model for many practical wireless communication scenarios.  The channel consists of a finite number of consecutive or parallel transmission blocks, where each block is affected by an independent fading coefficient.  The model well approximates the characteristics of delay-limited transmission over slowly varying channels, such as Orthogonal Frequency Division Multiplexing (OFDM) transmission over slowly-fading frequency-selective multipath channel, as well as narrow band transmission with frequency hopping as encountered in the Global System for Mobile Communications (GSM) and the Enhanced Data rate for  GSM evolution (EDGE) standards.  

Due to the finite number of fading blocks, the information rate supported by the channel depends on the instantaneous channel realization, and therefore is a random variable.  When the instantaneous mutual information is less than the transmission rate, transmission is in outage \cite{BiglieriProakisShamai1998}.  In this case, it follows from the strong converse theorem (see, e.g., \cite{Arimoto1973,MalkamakiLeib1999,CoverThomas2006}) that  messages are decoded in error with probability one \cite{CaireTuninetti2001,CaireTariccoBiglieri1999}.  Furthermore, it is shown in \cite{MalkamakiLeib1999,CaireTuninettiVerdu2004} that the use of sufficiently long-random codes achieves an average error rate equal to the outage probability.  Therefore, the outage probability is a fundamental limit on the performance of block-fading channels.

 Multi-input multi-output (MIMO) transmission has revolutionized modern wireless communications, and is now a key technology used in most current standards, e.g. WiFi (IEEE 802.11) and WiMax (IEEE 802.16) \cite{NandaWaltonKetchumWallaceHoward2005,GhoshWoltersAndrewsChen2005}.  Moreover, due to the randomness of the communication rate supported by the channel, it is essential to use adaptive techniques to enable high-rate reliable communication, where the transmission rate and/or power is adjusted to the channel realization.  The use of adaptive techniques depends strongly  on the availability of channel state information (CSI) at the transmitter and the receiver.  In most communication systems, CSI can be estimated at the receiver, while CSI is usually not directly available at the transmitter. The use of automatic-repeat-request (ARQ) transmission techniques is therefore a powerful approach for providing transmitter CSI, which in turn can be used to significantly improve the performance over block-fading channel \cite{CostelloHagenauerImaiWicker1998}. In this paper, we study the outage diversity of incremental redundancy (INR) ARQ, which characterizes the slope of the outage probability curve at high signal-to-noise ratio (SNR) in the log-log scale.

Before addressing the INR-ARQ case, we first consider fixed-rate transmission over the MIMO block-fading channel.  The optimal diversity-multiplexing tradeoff for a MIMO channel with optimal (Gaussian) input constellation has been characterized in \cite{ZhengTse2003}.  For systems with discrete input constellations, the rank criterion for the optimal outage diversity was derived in \cite{TarokhSeshadriCalderbank1998} from a worst-case analysis of the pair-wise error probability (PEP).  References \cite{LuKumar2003,LuKumar2005} establish the Singleton bound on the optimal SNR-exponent of quasi-static MIMO channels with discrete input constellations.  The Singleton bound is achievable by a wide range of input constellations via a unified code construction method proposed in \cite{LuKumar2005}.  In this paper, we show that the Singleton bound is the outage diversity, and is achievable by random codes with an arbitrary discrete input constellation.  This rigorously proves that the Singleton bound is the optimal SNR-exponent of MIMO transmission with discrete input constellations.  The result will also prove instrumental in designing and analyzing  INR-ARQ transmission over the MIMO block-fading channels.



As our main focus, we propose a multi-bit feedback INR-ARQ transmission scheme, and study its outage performance over the MIMO block-fading channel.  In an INR-ARQ scheme, transmission starts with a high-rate codeword, and additional redundancy is requested via a feedback link when the codeword is not successfully decoded.   Transmission is in outage if the codeword is not decodable within the maximum delay constraint allowed by the system.  Traditional INR-ARQ systems implement one-bit feedback from the receiver, indicating whether additional redundancy is required.  However, due to the accumulative nature of INR-ARQ schemes, performance improvements are possible when additional information regarding the status of the current transmission is provided through the feedback link.  Several multi-bit feedback INR-ARQ schemes have been proposed in the literature.  In \cite{VisotskySunTripathi2005,YiqingJiangzhou2006}, a multi-bit feedback INR-ARQ scheme was proposed for convolutional codes, while in \cite{SteinerShamai2008} a multi-layer broadcasting strategy with multi-bit feedback was shown to improve the throughput performance of INR-ARQ transmission.  There is, however, no unified approach for designing  multi-bit feedback INR-ARQ transmission.  In this paper, we take an information-theoretic approach to analyzing and designing schemes for multi-bit feedback INR-ARQ transmission, which provides significant improvement in outage performance over the MIMO block-fading channel.

  An important performance measure for INR-ARQ transmission in the MIMO block-fading channel is the rate-diversity-delay tradeoff.  This tradeoff has only been studied for INR-ARQ systems with one-bit ACK/NACK feedback in \cite{GamalCaireDamen2006,ChuangGuillenRasmussenCollings2008,NguyenRasmussenGuillenLetzepis2009}.  In particular, reference \cite{GamalCaireDamen2006} characterizes the rate-diversity-delay tradeoff of Gaussian input MIMO INR-ARQ systems with both constant and adaptive transmit power.  Reference \cite{ChuangGuillenRasmussenCollings2008} studies the tradeoff for MIMO INR-ARQ systems with rotated discrete input constellation, which increases the system diversity at the cost of increasing signal constellation size and hence the decoding complexity.  
 Inspired by \cite{ChuangGuillenRasmussenCollings2008}, the results of \cite{NguyenRasmussenGuillenLetzepis2009} show that power adaptation  offers significant gains in outage probability of INR-ARQ transmission.  In this paper, we extend the results of \cite{NguyenRasmussenGuillenLetzepis2009} to INR-ARQ systems with multi-bit feedback \cite{NguyenRasmussenGuillenLetzepis2009isit}.  In particular, we characterize the rate-diversity-delay tradeoff of multi-bit feedback MIMO INR-ARQ systems and show that multi-bit feedback and optimal power adaptation provide significant outage diversity gains for transmission over the block-fading channel.  A suboptimal feedback and power adaptive rule is also proposed, illustrating the benefits offered by multi-bit feedback.  

The remainder of the paper is organized as follows. Section \ref{sec:system-model} describes  the MIMO block-fading channel model.  Section \ref{sec:preliminaries} proposes the multi-bit feedback INR-ARQ system based on mutual information and information outage.  Sections \ref{sec:asymptotic-analysis} and \ref{sec:power-adapt-feedb} discuss system design and performance analysis. Finally, concluding remarks are given in Section \ref{sec:conclusion} and proofs are collected in the Appendices. 
\section{Channel Model}
\label{sec:system-model}
Consider INR-ARQ transmission over a MIMO block-fading channel with $\notx$ transmit and $\norx$ receive antennas.  Each ARQ round is transmitted over $\noblock$ additive white Gaussian noise (AWGN) blocks of $\Blength$ channel uses each, where block $\block$ at ARQ round $\round$ is affected by a flat fading channel gain matrix $\chmat_{\round,\block} \in \mathbb{C}^{\norx\times \notx}$.  The baseband equivalent of the channel in the $\round$-th ARQ round is given by 
\begin{equation}
  \label{eq:channel_model}
  \rxsig_{\round} = \sqrt{\frac{\txpow_{\round}}{\notx}}\chmat_{\round} \txsig_{\round} + \nosig_{\round}, 
\end{equation}
where $\txpow_{\round}$ is the transmit power in round $\round$, $\txsig_{\round} \in \complex^{\noblock \notx \times  \Blength}, \rxsig_{\round}, \nosig_{\round} \in \complex^{\noblock \norx \times  \Blength}$ are correspondingly the transmitted signal, the received signal, and the additive noise; while $\chmat_{\round} \in \complex^{\noblock\norx\times\noblock\notx}$ is  a  block diagonal channel gain matrix at round $\round$ with 
\[
\chmat_{\round}= \diag(\chmat_{\round, 1}, \ldots, \chmat_{\round, \noblock}). 
\]
In the INR-ARQ scheme, the receiver attempts to decode  at round $\round$ based on the  received signals collected  in rounds $1, \ldots, \round$.  The entire channel after  $\round$ ARQ rounds is
\begin{equation}
\label{eq:channel_model2}
\acrxsig{\round} =\acchmat{\round} \actxsig{\round}+ \acnosig{\round}, 
\end{equation}
where
\begin{align}
\acrxsig{\round}&= \left[\rxsig_{1}', \ldots, \rxsig_{\round}'\right]'\notag\\
\actxsig{\round}&=\left[\txsig_{1}', \ldots, \txsig_{\round}'\right]'\notag\\
\acchmat{\round}&=\diag\left( \sqrt{\frac{\txpow_{1}}{\notx}}\chmat_{1}, \ldots,  \sqrt{\frac{\txpow_{\round}}{\notx}}\chmat_{\round}\right)\notag\\
\acnosig{\round}&=\left[\nosig_{1}', \ldots, \nosig_{\round}'\right]' \notag
\end{align}
and $(\cdot)'$ denotes non-conjugate transpose. 
  We consider transmission where the entries of $\txsig_\round $ are drawn from an input constellation $\conste \subset \mathbb{C}$ of size $2^M$, and assume that the constellation $\conste$ has unit average energy, i.e.,  entries $x \in \conste$ of $\txsig_{\round}$ satisfy $\expectation{}{|x|^2} = 1$.  We further assume that the entries of $\chmat_{\round,\block}$ and $\nosig_{\round}$ are independently drawn from a unit-variance Gaussian complex distribution $\mathcal{N}_{\complex}(0, 1)$, and that $\chmat_{\round,\block}$ is available at the receiver.  The average SNR at each receive antenna is then $\txpow_{\round}$.  

We consider ARQ transmission with a long-term power constraint, where the average  power for each codeword is constrained to $\txpow$, i.e., 
\begin{equation}
\label{eq:expect_pow_constraint}
\expectation{\acchmat{\noround}}{\sum_{\round=1}^{\noround}\txpow_{\round}} \leq \txpow, 
\end{equation}
where  $\txpow_{\round}$ is adapted to $\acchmat{\round-1}$ through receiver feedback.  
\section{Preliminaries}
\label{sec:preliminaries}
\subsection{Accumulated Mutual Information}
Assuming that the realized channel matrix at round $\round$ is $\chmat_{\round}$, the input-output mutual information of the MIMO channel in round $\round$ is 
\begin{equation}
\label{eq:info_1round}
\info_{\round}\left(\sqrt{\frac{\txpow_{\round}}{\notx}}\chmat_{\round}\right) = \frac{1}{\noblock}\sum_{\block=1}^{\noblock}\IX{\sqrt{\frac{\txpow_{\round}}{\notx}} \chmat_{\round,\block}}, 
\end{equation}
where $\IX{\sqrt{\frac{\txpow_{\round}}{\notx}} \chmat_{\round,\block}}$ is the input-output mutual information \cite{CoverThomas2006}, measured in bits per channel use (bpcu), of an AWGN MIMO channel with input constellation $\conste$ and channel  matrix $\sqrt{\frac{\txpow_{\round}}{\notx}}\chmat_{\round,\block}$.  The average input-output mutual information after $\round$ ARQ rounds is given by $\frac{1}{\round}\sum_{\roundrun=1}^{\round}\info_{\roundrun}$ bpcu.  Let 
\begin{equation}
\label{eq:accinfo}
\acinfo{\round}\triangleq \sum_{\roundrun=1}^{\round} \info_{\roundrun}
\end{equation}
be the {\em accumulated mutual information} after $\round$ ARQ rounds.  We now propose the multi-bit feedback INR-ARQ transmission scheme based on the accumulated mutual information $\acinfo{\round}$. 


\subsection{Multi-Level Feedback}
\label{sec:inr-arq-system}
We consider an INR-ARQ system with a delay constraint of  $\noround$ ARQ rounds, where a feedback index $\fb \in \{0, \ldots, \nolevel-1\}$ is delivered after each transmission round through a zero-delay error-free feedback channel.  Power and rate adaptation are performed based on receiver feedbacks.  The overall system  model is illustrated in Figure \ref{fig:ARQ_sys}.  

\subsubsection{Transmitter}
\label{sec:transmitter}
Consider a code book $\code$ of rate $\frac{R_{\consize}}{\noround}$, $R_{\consize} \in (0, \consize)$ bits per coded symbol,  that maps a message $m \in \{1, \ldots, 2^{R_{\consize}\notx \noblock\Blength}\}$ to a codeword $\boldsymbol{x}(m) \in \conste^{\notx  \noblock \Blength \noround}$.  At transmission round $\round$,  $\notx \noblock \Blength$ of the  coded symbols are formatted into $\txsig_{\round}(m) \in \conste^{\noblock\notx \times \Blength}$ and transmitted via the channel in (\ref{eq:channel_model}) with power $\txpow_{\round}(\fbvec_{\round-1})$, where $\fbvec_{\round-1}= [\fb_1, \ldots, \fb_{\round-1}]$ is the vector of feedback indices collected from rounds $1, \ldots, \round-1$.  The realized code rate of a single ARQ round is $\rate \triangleq \rate_{\consize}\notx$ bpcu,  and the realized code rate after $\round$ ARQ rounds is $\frac{\rate}{\round}$ bpcu.  If feedback $\fb_{\round}= \nolevel-1$ (denoting positive acknowledgment (ACK)) is received  after $\round$ transmission rounds, the transmission is successful  and transmission of the next message starts.  Otherwise, the transmitter continues with new transmission rounds until feedback index $\nolevel-1$ is received or until $\noround$ transmission rounds have elapsed.  
\subsubsection{Receiver}
\label{sec:receiver}
Upon receiving round $\round$, the receiver attempts to decode the transmitted message from the received signals collected from rounds 1 to $\round$.  The receiver employs a decoder with error detection capabilities as described in \cite{CaireTuninetti2001}.  The decoder outputs $\hat{m} \in \{1, \ldots, 2^{\rate\noblock\Blength}\}$ if  there exists a unique message $\hat{m}$ such that $\actxsig{\round}(\hat{m})$ and $\acrxsig{\round}$ are jointly typical conditioned on $\acchmat{\round}$ \cite{CoverThomas2006}; then  an ACK is delivered to the transmitter via feedback index $\fb_{\round}= \nolevel-1$.  Otherwise, a quantization of the {\em accumulated mutual information} $\acinfo{\round}$ is delivered via feedback index $\fb_{\round}$ satisfying $\acinfo{\round} \in \setopenr{\ \fbthre([\fbvec_{\round-1}, \fb_{\round}]), \fbthre([\fbvec_{\round-1}, \fb_\round+1])}$, with predefined quantization thresholds $\fbthre(\fbvec_{\round}), \fbvec_{\round} \in \{0, \ldots, \nolevel-2\}^{\round}$, and $\fbthre([\fbvec_{\round-1},\nolevel-1])= \infty$ for $\round=1, \ldots, \noround-1$.  An example of the feedback thresholds for the first two rounds of an ARQ system with $\nolevel =4$ is illustrated in Figure \ref{fig:threshold}.   Feedback index $3$ is used to denote successful transmission.  At the first ARQ round, the leftmost set of feedback thresholds is used; while at the second ARQ round, one of the three sets of feedback thresholds on the right is employed, depending on which feedback index was delivered in the first round.  
Noting that $\acinfo{\round+1} \geq \acinfo{\round}$, the feedback thresholds in round $\round+1$ should be designed such that $\fbthre(\fbvec_{\round}) = \fbthre([\fbvec_{\round}, 0]) <  \ldots < \fbthre([\fbvec_{\round}, \nolevel-2])$.  Thus, the set of quantization thresholds is completely defined by $\fbthre(\fbvec_{\noround-1})$ for all practical purposes. 

\subsubsection{Power constraint}
The probability of having feedback vector $\fbvec_{\round}$ at round $\round$, denoted as $\prfb(\fbvec_{\round})$, is recursively expressed as
\begin{align}
\prfb(\fbvec_0)&= 1\\
\label{eq:prfb}\prfb([\fbvec_{\round-1}, \fb])&= \myPr{\fb_{\round}=\fb|\fbvec_{\round-1}}\prfb(\fbvec_{\round-1}),\\
 \myPr{\fb_{\round}= \fb\big| \fbvec_{\round-1}}&= 
\myPr{\acinfo{\round-1}+ \info_{\round} \in \setopenr{\ \fbthre([\fbvec_{\round-1}, k]), \fbthre([\fbvec_{\round-1}, k+1])}\big|\fbvec_{\round-1}}, \notag
\end{align}
where $\info_{\round}$ is given by (\ref{eq:info_1round}) with $\txpow_{\round}=\txpow_{\round}(\fbvec_{\round-1})$. 
Then, the power constraint in (\ref{eq:expect_pow_constraint}) can be written as
\begin{equation}
\label{eq:pow_constraint}
\expectation{\acchmat{\noround}}{\sum_{\round=1}^{\noround}\power_{\round}}=\txpow_1+ \sum_{\round=2}^{\noround}\sum_{\fbvec_{\round-1} \in \{0, \ldots, \nolevel-1\}^{\round-1}} \prfb(\fbvec_{\round-1}) \txpow_{\round}(\fbvec_{\round-1}) \leq \txpow.
\end{equation}

\subsection{Information Outage}
After $\round$ ARQ rounds, the input-output mutual information  is $\frac{\acinfo{\round}}{\round}$ and the realized code rate is $\frac{\rate_{\consize}\notx}{\round}= \frac{\rate}{\round}$.  Hence, transmission is in outage at round $\round$ if $\acinfo{\round} < \rate$.  The probability of having an outage at round $\round$ is then given by 
\begin{equation}
\label{eq:def_outage_prob}
\pout(\round) \triangleq \myPr{\acinfo{\round} < \rate}. 
\end{equation}
With an optimal coding scheme, and in the limit of the number of channel uses $\Blength \to \infty$, the codeword is correctly decoded whenever $\acinfo{\round} > \rate$; otherwise, an error is detected \cite{CaireTuninetti2001}.  Therefore, the outage probability $\pout(\round)$ is an achievable lower bound on the word error probability at round $\round$.  For INR-ARQ transmission with delay constraint $\noround$, the overall outage probability is $\pout(\noround)$.     


\section{Asymptotic  Analysis}
\label{sec:asymptotic-analysis}
 Consider a power adaptation rule $\txpow_{\round}= \txpow_{\round}(\fbvec_{\round-1})$ satisfying the power constraint in (\ref{eq:pow_constraint}).   We prove that for large  $\txpow$, the outage probability at round $\round$ behaves like
\begin{equation}
\pout(\round) \doteq \txpow^{-\diver_{\round}(R)}, 
\end{equation}
where $\diver_{\round}(R)$ is the outage diversity at round $\round$ and  the exponential equality ($\doteq$) indicates \cite{ZhengTse2003}
\begin{equation}
\label{eq:def_expeq}\diver_{\round} (R) = \lim_{\txpow \to \infty} \frac{-\log \pout(\round)}{\log \txpow}.
\end{equation}
We determine the optimal rate-diversity-delay tradeoff $\diver_{\round}(\rate)$ of ARQ systems with $\nolevel$ levels feedback and prove that the optimal outage diversity is achievable.
\subsection{MIMO Block-Fading without ARQ}
In order to characterize the outage diversity or achievable SNR-exponent for the MIMO INR-ARQ channel, we first need to study the corresponding limits for fixed-rate transmission over the MIMO block-fading channel. These results are keys to proving our main results for multi-bit ARQ.
\begin{theorem}
\label{the:asym_BF}
Consider fixed-rate transmission ($\noround =1$) with rate $\rate$ and power $\txpow$ over the MIMO block-fading channel in (\ref{eq:channel_model}) using constellation $\conste$ of size $2^{\consize}$ and the transmission scheme described in Section \ref{sec:inr-arq-system}.  Let $\info= \info_1\left(\sqrt{\frac{\txpow}{\notx}}\chmat_1\right)$ be the realized input-output  mutual information as defined in (\ref{eq:info_1round}).  For large $\txpow$, we have that 
\begin{align}
\label{eq:asymp_BF}
\myPr{\info < \rate} &\doteq \txpow^{-\diver(\rate)}, \\
\myPr{\info \leq \rate} &\doteq \txpow^{-\rdiver(\rate)},
\end{align}
where $\diver(\rate)$ is bounded by $\rdiver(\rate) \leq \diver(\rate) \leq \tdiver(\rate)$, and
\begin{align}
\label{eq:diver_BF_bound}
\tdiver(\rate) &\triangleq\norx\left(1+ \left\lfloor B\left(\notx-\frac{\rate}{\consize}\right)\right\rfloor\right)\\
\label{eq:diver_BF_achieve}\rdiver(\rate)&\triangleq \norx\left\lceil \noblock\left(\notx-\frac{\rate}{\consize}\right)\right\rceil.
\end{align}
Furthermore, $\rdiver(\rate)$ is the SNR-exponent for the case of using  random codes with rate $\rate$, where the code symbols are drawn  uniformly from $\conste$.  
\end{theorem}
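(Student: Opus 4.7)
The plan is to prove the theorem in three parts: (i) a direct high-SNR evaluation of $\myPr{\info \leq \rate}$ giving the exact exponent $\rdiver(\rate)$, which by monotonicity also yields the lower bound $\diver(\rate) \geq \rdiver(\rate)$; (ii) a random-coding argument showing that random uniform codes achieve an error probability equal to the outage $\myPr{\info < \rate}$ as $\Blength \to \infty$, establishing that $\diver(\rate)$ is an achievable SNR exponent; and (iii) the Singleton bound on the block--antenna Hamming distance of any discrete-input code, which lower-bounds the worst-case pairwise error probability by $\txpow^{-\tdiver(\rate)}$, yielding $\diver(\rate) \leq \tdiver(\rate)$.

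For (i), parameterize the ordered squared singular values of $\chmat_{1,\block}$ by $\sigma_{\block,i}^2 = \txpow^{-\alpha_{\block,i}}$ and use the Zheng--Tse joint density estimate $\doteq \txpow^{-\sum_{\block,i}(2i-1+|\norx-\notx|)\alpha_{\block,i}}$ on the ordered simplex. The crucial intermediate step is a high-SNR sandwich for the discrete-input block mutual information $\IX{\sqrt{\txpow/\notx}\,\chmat_{1,\block}}$, showing that up to $o(1)$ it is determined by the exponent profile $\{\alpha_{\block,i}\}$. Translating the outage condition $\info \leq \rate$ into an integer constraint of the form $\sum_{\block,i}\openone\{\alpha_{\block,i}\geq 1\}\geq \lceil \noblock(\notx-\rate/\consize)\rceil$ and applying Laplace's method gives a per-deficit cost of $\norx$ in the exponent, yielding $\myPr{\info \leq \rate}\doteq \txpow^{-\rdiver(\rate)}$.

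For (ii), draw each coded symbol i.i.d.\ uniformly from $\conste$ and bound the channel-averaged block error probability by a joint-typicality argument. Conditioning on $\chmat_1$, the decoder's error probability is exponentially small in $\Blength$ whenever $\info > \rate$ and trivially bounded by one otherwise. Taking expectations and letting $\Blength \to \infty$ shows that the expected error probability tends to $\myPr{\info < \rate} \doteq \txpow^{-\diver(\rate)}$, so $\diver(\rate)$ is an achievable SNR exponent. For (iii), the combinatorial Singleton bound on the block--antenna Hamming distance implies that any discrete-input code of rate $\rate$ has at least one codeword pair with distance at most $1+\lfloor \noblock(\notx-\rate/\consize)\rfloor$; the Tarokh--Seshadri--Calderbank rank criterion then lower-bounds the pairwise error probability of that pair by $\txpow^{-\tdiver(\rate)}$, so the block error probability of any code is at least $\txpow^{-\tdiver(\rate)}$, and therefore $\diver(\rate) \leq \tdiver(\rate)$.

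The main obstacle is the two-sided high-SNR characterization of $\IX{\sqrt{\txpow/\notx}\,\chmat_{1,\block}}$ in part (i). Unlike the Gaussian-input case with its closed-form $\log\det$ expression, $\IX{\cdot}$ admits no explicit formula, and its high-SNR behaviour can degrade both through shrinkage of singular values and through near-alignments of channel eigenvectors with constellation difference vectors. Producing a characterization tight enough to yield exactly $\rdiver(\rate)$---rather than a loose bound such as $\norx\notx\lceil\noblock(\consize-R_{\consize})/\consize\rceil$ obtained by requiring each outaged block to have $\chmat_{1,\block}\approx 0$---requires a careful combination of Fano-type upper bounds on $H(\txsig_1|\rxsig_1,\chmat_1)$ together with sphere-packing lower bounds on the rate achievable at a given channel realization, with all estimates uniform on the exponentially relevant portion of $\alpha$-space.
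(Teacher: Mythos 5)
Your part (i) contains a genuine error that would produce the wrong exponent. You parameterize the per-block channels by their ordered squared singular values and invoke the Zheng--Tse joint density $\doteq \txpow^{-\sum_{\block,i}(2i-1+|\norx-\notx|)\alpha_{\block,i}}$, then assert a ``per-deficit cost of $\norx$'' for each exponent forced to exceed $1$. These two claims are inconsistent: under that density, driving the smallest singular value of a $2\times 2$ block to zero costs exponent $2\cdot 1-1+0=1$, not $\norx=2$, so Laplace's method over your proposed outage set yields an exponent strictly smaller than $\rdiver(\rate)$. The deeper problem is that the reduction of the discrete-input outage event to a count of vanishing singular values is false: with a finite constellation, a block whose matrix is near-singular but whose columns all have $O(1)$ norm still delivers close to $\consize\notx$ bits at high SNR, because the finitely many difference vectors generically avoid the null space (the phenomenon you flag as ``near-alignments'' is measure-zero and does not drive the exponent). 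The correct combinatorics, which is what produces the Singleton form $\norx\lceil\noblock(\notx-\rate/\consize)\rceil$, tracks the individual entries $h_{\block,\txind,\rxind}$: mutual information is lost only when an entire column (all $\norx$ gains from a given transmit antenna) fades, which is exactly what costs $\norx$ per unit of rate deficit. Your acknowledged ``main obstacle'' -- a two-sided high-SNR characterization of $\IX{\cdot}$ uniform over $\alpha$-space -- is therefore not merely unfinished; the parameterization you chose cannot deliver the stated answer.

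The paper circumvents this obstacle entirely, and you may want to compare routes. For the converse direction ($\diver(\rate)\leq\tdiver(\rate)$ and the lower bound on $\myPr{\info\leq\rate}$) it uses a genie that removes inter-antenna interference, reducing the channel to $\notx$ parallel SIMO block-fading channels whose discrete-input outage exponents are already known; since $\info\leq\infoga$, the genie outage probability lower-bounds the true one. This replaces both your singular-value analysis and your part (iii): note that your Singleton-plus-rank-criterion argument bounds the error probability of \emph{codes}, and transferring that to a lower bound on the outage probability $\myPr{\info<\rate}$ of the strict event requires an extra (and delicate) achievability link, whereas the genie argument acts directly on the mutual information. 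For achievability, the paper does not characterize $\IX{\cdot}$ at all: it computes a Bhattacharyya union bound on the random-coding error probability averaged over the ensemble, shows the bound vanishes unless $\sum_\block\kappa_\block\leq\noblock\rate/\consize$ (where $\kappa_\block$ counts transmit antennas seen with near-unit gain by some receive antenna), integrates the entrywise fading density over that set to get $\rdiver(\rate)$, and then identifies the random-coding error probability with $\myPr{\info\leq\rate}$ via the strong converse. Your part (ii) is close in spirit to that last identification step, but without the union-bound computation it does not establish the exponent.
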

\begin{proof}
See Appendix \ref{sec:proof-theor-refpr}.
\end{proof}

To the best of our knowledge, a rigorous proof of this result has not been reported in the literature.  The results of \cite{LuKumar2005} show that $\tdiver(\rate)$ is an upper bound to the optimal SNR-exponent, which is achievable with a range of fixed input constellations using the proposed code construction method.  Theorem \ref{the:asym_BF} shows that $\tdiver(\rate)$ is actually the outage diversity of the MIMO block-fading channel with an arbitrary input constellation of size $2^\consize$.  Furthermore, $\tdiver(\rate)$ is achievable by using random codes with an arbitrary fixed input constellation.


\subsection{Multi-bit MIMO ARQ}
We now consider  ARQ transmission over the block-fading channel in (\ref{eq:channel_model}) using input constellation $\conste$ as described in Section \ref{sec:transmitter}.   Using Theorem \ref{the:asym_BF}, the optimal rate-diversity-delay tradeoff of the MIMO INR-ARQ scheme with multi-bit feedback  is characterize as follows. 
\begin{theorem}
\label{the:out_diver_K}
Consider INR-ARQ transmission over the MIMO block-fading channel in (\ref{eq:channel_model}) using constellation $\conste$ of size $2^{\consize}$ and the transmission scheme described in Section \ref{sec:inr-arq-system}, where a codeword is considered successfully delivered at  round $\round$ if $\acinfo{\round} \geq \rate$.  Assume that the number of feedback levels is $\nolevel \geq \left\lceil\frac{\noblock \rate}{\consize}\right\rceil+1$.  Subject to the power constraint in (\ref{eq:pow_constraint}), the optimal rate-diversity-delay tradeoff is given by
\begin{equation}
\label{eq:out_diver_K} \diver_{\round}(\rate) =(1+\noblock\notx\norx)^{\round-1}\left(\tdiver(\rate)+1\right)-1
\end{equation}
when $\frac{\noblock\rate}{\consize}$ is not an integer, where $\tdiver(\rate)$ is given in Theorem \ref{the:asym_BF}.
\end{theorem}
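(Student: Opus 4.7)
The plan is to prove Theorem~\ref{the:out_diver_K} by induction on $\round$, using Theorem~\ref{the:asym_BF} both as the base case and as the central single-round building block for the inductive step. At $\round=1$ the formula collapses to $\diver_1(\rate)=\tdiver(\rate)$, which is exactly the exponent supplied by Theorem~\ref{the:asym_BF}. The inductive step amounts to establishing the recursion
\begin{equation*}
\diver_{\round}(\rate)+1=(1+\noblock\notx\norx)\bigl(\diver_{\round-1}(\rate)+1\bigr),
\end{equation*}
for which I would prove a converse bounding every feasible feedback/power rule and then exhibit a matching achievability.

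For the converse I would condition on the feedback history and write
\begin{equation*}
\pout(\round)=\sum_{\fbvec_{\round-1}}\prfb(\fbvec_{\round-1})\,\myPr{\info_{\round}<\rate-\acinfo{\round-1}\,\big|\,\fbvec_{\round-1}}.
\end{equation*}
Because $\fbvec_{\round-1}$ pins $\acinfo{\round-1}$ to the interval $\setopenr{\fbthre(\fbvec_{\round-1}),\,\fbthre([\fbvec_{\round-2},\fb_{\round-1}+1])}$, Theorem~\ref{the:asym_BF} applied to the single-round channel at round $\round$ bounds the conditional outage exponent in $\txpow_{\round}(\fbvec_{\round-1})$ by $\tdiver\bigl(\rate-\fbthre(\fbvec_{\round-1})\bigr)$. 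The binding branch is the one whose threshold $\fbthre(\fbvec_{\round-1})$ sits just below $\rate$, so that the effective rate $\rate-\fbthre(\fbvec_{\round-1})$ is small but positive and the single-round exponent saturates at $\noblock\notx\norx$ via (\ref{eq:diver_BF_bound}). By the inductive hypothesis the probability of reaching this branch is $\prfb(\fbvec_{\round-1})\doteq\txpow^{-\diver_{\round-1}(\rate)}$, and the long-term power constraint (\ref{eq:pow_constraint}) then caps the usable power on it at $\txpow_{\round}(\fbvec_{\round-1})\doteq\txpow^{1+\diver_{\round-1}(\rate)}$. Combining these contributions produces the total outage exponent $\diver_{\round-1}(\rate)+\noblock\notx\norx\bigl(1+\diver_{\round-1}(\rate)\bigr)$, which is exactly the right-hand side of the recursion.

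For achievability I would exhibit an explicit rule that saturates this bottleneck. The threshold ladder $\fbthre(\fbvec_{\round})$ partitions the support of $\acinfo{\round}$ into $\nolevel-1$ intervals whose conditional probabilities have prescribed SNR exponents, and the power is set to $\txpow_{\round}(\fbvec_{\round-1})=c\,\txpow^{1+\diver_{\round-1}(\rate)}$ on the topmost non-ACK branch and geometrically smaller on lower branches. The hypothesis $\nolevel\geq\lceil\noblock\rate/\consize\rceil+1$ is precisely what is needed so that the ladder is fine enough to drive the residual effective rate on the critical branch into the regime where $\tdiver$ saturates at $\noblock\notx\norx$; the non-integrality of $\noblock\rate/\consize$ rules out boundary effects where $\tdiver$ and $\rdiver$ would disagree in Theorem~\ref{the:asym_BF}.

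The main obstacle will be the converse: the thresholds $\fbthre$ and powers $\txpow_{\round}(\cdot)$ are jointly coupled design variables, so one must show that no alternative staircase or non-uniform allocation circumvents the bottleneck. Technically I would formalise this via a Laplace-type optimisation of $\prfb(\fbvec_{\round-1})\,\txpow_{\round}(\fbvec_{\round-1})^{-\noblock\notx\norx}$ subject to $\sum\prfb(\fbvec_{\round-1})\txpow_{\round}(\fbvec_{\round-1})\leq\txpow$, exploiting the fact that only $O(1)$ feedback histories carry non-vanishing probability at high SNR so that sharing the budget among the $\nolevel^{\round-1}$ branches incurs only a constant loss. The strict-versus-non-strict distinction between $\diver$ and $\rdiver$ in Theorem~\ref{the:asym_BF} also demands care, which is where the non-integrality hypothesis on $\noblock\rate/\consize$ earns its keep.
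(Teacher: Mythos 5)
Your architecture matches the paper's in outline (induction on $\round$, a two-sided argument, Theorem \ref{the:asym_BF} as the single-round building block, and a threshold ladder at the staircase points $\stairthre_{\stair}=\consize\stair/\noblock$), and your recursion $\diver_{\round}+1=(1+\noblock\notx\norx)(\diver_{\round-1}+1)$ is the right one. But the inductive step as you have set it up has a genuine gap: you track only the ``topmost'' branch, where $\acinfo{\round-1}$ sits just below $\rate$. The round-$\round$ outage event decomposes over \emph{all} staircase levels of the accumulated information,
\begin{equation*}
\myPr{\acinfo{\round}<\rate}\doteq\sum_{j=0}^{\stairrate}\txpow^{-\Kdiverint_{\round-1}(j)-\left[1+\Kdiverint_{\round-1}(j)\right]\norx(\noblock\notx-\stairrate+j)},
\end{equation*}
where $\Kdiverint_{\round-1}(j)$ is the exponent of $\myPr{\acinfo{\round-1}\in\setopenr{\stairthre_j,\stairthre_{j+1}}}$, and one must show the minimum over $j$ of the combined exponent equals $\diver_{\round}(\rate)$. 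That minimum is attained simultaneously at $j=\stairrate$ (your branch) and at $j=0$ (almost no accumulated information after $\round-1$ rounds, then the final round must supply nearly the full rate $\rate$ at exponent $\norx(\noblock\notx-\stairrate)$, not the saturated $\noblock\notx\norx$). The $j=0$ event is equally dominant and cannot be ignored; the induction must therefore carry the whole vector $\Kdiverint_{\round}(0),\ldots,\Kdiverint_{\round}(\stairrate)$, not just the top entry. This is precisely why the paper's Remark \ref{rem:set_thres} singles out $\acinfo{\round}\in\setopenr{0,\consize/\noblock}\cup\setopenr{\stairthre_{\stairrate},\rate}$ as the dominant region.

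This omission also breaks your achievability rule. You allocate $c\,\txpow^{1+\diver_{\round-1}(\rate)}$ to the topmost non-ACK branch and ``geometrically smaller'' powers to lower branches. That is backwards: at high SNR the low-information branches are the \emph{rare} ones ($\Kdiverint_{\round-1}(j)$ decreases in $j$), so the achieving rule is inverse-probability allocation, $\txpow_{\round}(j)\doteq\txpow^{1+\Kdiverint_{\round-1}(j)}$, which gives the lower branches \emph{more} power. With your allocation the $j=0$ term contributes exponent $\Kdiverint_{\round-1}(0)+\bigl(1+\diver_{\round-1}(\rate)\bigr)\norx(\noblock\notx-\stairrate)$, which is strictly smaller than $\diver_{\round}(\rate)$ whenever $\stairrate\geq 1$, so the claimed diversity is not achieved. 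Your converse also needs the fact (used implicitly by the paper after adding the staircase thresholds to an arbitrary threshold set) that every quantization cell contained in a staircase interval $\setopen{\stairthre_j,\stairthre_{j+1}}$ has the \emph{same} probability exponent $\Kdiverint_{\round-1}(j)$ as the whole interval, so that no cell can legitimately be fed more than $\txpow^{1+\Kdiverint_{\round-1}(j)}$; your ``only $O(1)$ histories matter'' heuristic does not substitute for this piecewise-constancy argument.
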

\begin{proof}
See Appendix \ref{sec:proof-out-diver-K}.
\end{proof}
\begin{remark}
  \label{rem:set_thres}
  Beside the optimal outage diversity, the proof also shows the following result, which is useful in designing the feedback rules.
  \begin{itemize}
  \item The optimal outage diversity of INR-ARQ systems is achievable with $\left\lceil \frac{\noblock\rate}{\consize}\right\rceil+1$ feedback levels, where the feedback thresholds of each round are fixed at $\stairthre_{\stair}= \frac{\consize\stair}{\noblock}, \stair=0, \ldots, \left\lfloor\frac{\noblock\rate}{\consize}\right\rfloor$.    Therefore, for systems with $\nolevel \geq \left\lceil \frac{\noblock\rate}{\consize}\right\rceil+1$, the optimal outage diversity is achievable if for $\round=1, \ldots, \noround$, $\{\stairthre_\stair: R \geq \stairthre_\stair \geq \fbthre(\fbvec_{\round-1})\} \subseteq \{\fbthre(\fbvec_\round), \fbvec_\round \in \{1, \ldots, \nolevel-1\}^\round\}$.
  \item Furthermore, the outage probability in round $\round+1$ is dominated by the events with $\acinfo{\round} \in \setopenr{0, \frac{\consize}{\noblock}} \cup \setopenr{\stairthre_\stairrate, \rate}$, where $\stairrate=\left\lfloor\frac{\noblock\rate}{\consize}\right\rfloor$.  Therefore, for systems with $\nolevel > \left\lceil \frac{\noblock\rate}{\consize}\right\rceil+1$ feedback levels, the feedback thresholds should give higher priority to quantizing the aforementioned region to improve outage performance. 
  \end{itemize}
\end{remark}
  We now prove that the rate-diversity-delay tradeoff $\diver_{\round}(\rate)$ is achievable by using random codes, as given by the following theorem. 
\begin{theorem}
\label{the:achieve_Kbit}
Consider INR-ARQ transmission over the MIMO block-fading channel in (\ref{eq:channel_model}) using constellation $\conste$ of size $2^{\consize}$ and the transmission scheme described in Section \ref{sec:inr-arq-system} with power constraint $\txpow$ given in (\ref{eq:pow_constraint}).  Assume that the number of feedback levels is $\nolevel \geq \left\lceil\frac{\noblock\rate}{\consize}\right\rceil+1$.  With random-coding schemes and $\Blength \to \infty$, for large $\txpow$, the word error probability $\prober(\round)$ at round $\round$ satisfies
$\prober(\round) \doteq \txpow^{-\achievediver_{\round}(\rate)},$ 
where 
\begin{equation}
\label{eq:achieve_RDT}
\achievediver_{\round}(\rate)= (1+ \noblock\notx\norx)^{\round-1}\left(\rdiver(\rate)+1\right)-1 
\end{equation}
 is the achievable SNR-exponent and $\rdiver(\round)$ is given in Theorem \ref{the:asym_BF}. 
\end{theorem}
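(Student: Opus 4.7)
The plan is to reduce the word-error analysis at round $\round$ to the tail behaviour of the accumulated mutual information and then invoke the same recursive structure that produced Theorem \ref{the:out_diver_K}, with the converse exponent $\tdiver$ replaced by the random-coding exponent $\rdiver$ of Theorem \ref{the:asym_BF}.

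First, I would establish the random-coding reduction. With the joint-typicality decoder of Section \ref{sec:receiver} applied to the concatenated channel $\acchmat{\round}$, standard arguments as in \cite{CaireTuninetti2001} show that, in the limit $\Blength \to \infty$ and averaged over the random code ensemble, the decoder succeeds whenever $\acinfo{\round} > \rate$ and fails otherwise. Hence
\begin{equation*}
\prober(\round) \doteq \myPr{\acinfo{\round} \leq \rate},
\end{equation*}
and the task reduces to determining the SNR-exponent of this tail probability under the multi-bit feedback protocol.

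Next, I would reuse the feedback thresholds prescribed in Remark \ref{rem:set_thres} together with the diversity-achieving power schedule $\txpow_\round(\fbvec_{\round-1})$ employed in the proof of Theorem \ref{the:out_diver_K}. Decomposing over the feedback history gives
\begin{equation*}
\myPr{\acinfo{\round} \leq \rate} = \sum_{\fbvec_{\round-1}} \prfb(\fbvec_{\round-1})\, \myPr{\info_\round \leq \rate - \acinfo{\round-1} \,\big|\, \fbvec_{\round-1}},
\end{equation*}
where each inner probability is the non-strict tail of a single-round mutual information on the channel $\sqrt{\txpow_\round(\fbvec_{\round-1})/\notx}\,\chmat_\round$. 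By Theorem \ref{the:asym_BF} this tail behaves like $\txpow_\round(\fbvec_{\round-1})^{-\rdiver(\rate-\acinfo{\round-1})}$ at high SNR, so the recursion used to prove Theorem \ref{the:out_diver_K} can be replayed with $\tdiver$ everywhere replaced by $\rdiver$. Unwinding the recursion yields $\achievediver_\round(\rate) = (1+\noblock\notx\norx)^{\round-1}(\rdiver(\rate)+1)-1$, the multiplicative factor $(1+\noblock\notx\norx)$ reflecting the maximal per-round power amplification compatible with the long-term constraint (\ref{eq:pow_constraint}).

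The main technical obstacle is twofold. First, the random-coding conclusion of Theorem \ref{the:asym_BF} must be extended to the incremental-redundancy setting in which the symbols transmitted across rounds are prefixes of a single random codeword rather than independent codewords, so that the joint-typicality analysis is performed over $\acchmat{\round}$ rather than over each $\chmat_\round$ separately. Second, the exponent $\rdiver(\rate-\acinfo{\round-1})$ must be shown to hold uniformly in the residual rate over each feedback bin and in the admissible conditioning realizations of $\acchmat{\round-1}$, so that the sum over $\fbvec_{\round-1}$ in the recursion is dominated by the expected worst-case term. Both points follow the template of the one-bit feedback treatment in \cite{NguyenRasmussenGuillenLetzepis2009}, extended here to account for the finer quantization of the accumulated mutual information afforded by multi-bit feedback.
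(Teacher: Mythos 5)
Your proposal is correct and follows essentially the same route as the paper: the paper's proof likewise reduces the word-error event under random coding and $\Blength\to\infty$ to $\{\acinfo{\round}\leq\rate\}$ (with ACK on strict inequality), and then replays the induction from the proof of Theorem \ref{the:out_diver_K} using the non-strict tail estimate $\myPr{\info_{\round}\leq \info}\doteq \txpow_{\round}^{-\rdiver(\info)}$ from Theorem \ref{the:asym_BF} in place of $\tdiver$. Your write-up is in fact more explicit than the paper's two-sentence sketch about the decomposition over feedback histories and the uniformity issues, but it is the same argument.
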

\begin{proof}
With a random coding scheme and $\Blength \to \infty$, the codeword is correctly decoded with probability one  at round $\round$ if $\acinfo{\round} > \rate$ \cite{CaireTuninetti2001,KnoppHumblet2000}, in which case, the receiver feeds back an ACK (in contrast to the outage case, where an ACK is fed back if $\acinfo{\round} \geq \rate$).  The proof then follows similar arguments as the proof of Theorem \ref{the:out_diver_K}, noting from Theorem \ref{the:asym_BF} that $\myPr{\info_{\round} \leq \info} \doteq \txpow_{\round}^{-\rdiver(\info)}$. 
\end{proof}

Theorem \ref{the:achieve_Kbit} shows that the rate-diversity-delay tradeoff $\diver_\round(\rate)$ stated in Theorem \ref{the:out_diver_K} is achievable  with random codes using  the transmission scheme described in Section \ref{sec:inr-arq-system} when $\frac{\noblock\rate}{\consize}$ is not an integer; and then, the optimal rate-diversity-delay tradeoff is given by (\ref{eq:achieve_RDT}).   Furthermore, the optimal outage diversity and SNR-exponent of INR-ARQ transmission with delay constraint $\noround$ is similarly characterized by $\diver_\noround(\rate)$ and $\achievediver_\noround(\rate)$ given in (\ref{eq:out_diver_K}) and (\ref{eq:achieve_RDT}), respectively. 
   

\subsection{One-bit MIMO ARQ}
In an  INR-ARQ system with one-bit ACK/NACK feedback (classical INR-ARQ), the optimal rate-diversity-delay tradeoff is given by the following. 
\begin{theorem}
\label{the:onebit}
Consider INR-ARQ transmission over the MIMO block-fading channel in (\ref{eq:channel_model}) using constellation $\conste$ of size $2^{\consize}$ and the transmission scheme described in Section \ref{sec:inr-arq-system}, where a codeword is considered successfully delivered at  round $\round$ if $\acinfo{\round} \geq \rate$.  Assume that  the number of feedback levels is $\nolevel=2$.   Subject to the power constraint in (\ref{eq:pow_constraint}), the optimal rate-diversity-delay tradeoff is given by
\begin{align}
\diveronebit_1(\rate)&= \tdiver(\rate)\\
\label{eq:diver_1bit}\diveronebit_{\round}(\rate)&= \noblock\notx\norx\left(\round-1 + \sum_{\roundrun=1}^{\round-2}\diveronebit_{\roundrun}(\rate)\right)+ (1+\diveronebit_{\round-1}(\rate))\diveronebit_1(\rate), ~~~ \round \geq 2. 
\end{align}
for all $\rate$ such that $\diveronebit_1(\rate)$ is continuous.  Furthermore, the rate-diversity-delay tradeoff $\diveronebit_{\round}(\rate)$ is achievable when $\frac{\noblock\rate}{\consize}$ is not an integer. 
\end{theorem}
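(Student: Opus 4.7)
The plan is to prove Theorem \ref{the:onebit} by induction on the round index $\round$, establishing the recursion for $\diveronebit_\round(\rate)$ via a converse upper bound and a matching random-coding achievability. The base case $\round=1$ is fixed-rate transmission with no ARQ, for which Theorem \ref{the:asym_BF} immediately yields $\diveronebit_1(\rate) = \tdiver(\rate)$.

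The essential simplification of the $\nolevel=2$ setting is that the only non-ACK feedback symbol is $0$, so at round $\round$ the transmitter state collapses to the single event $\fbvec_{\round-1} = \mathbf{0}$, and $\txpow_\round$ is a single scalar rather than a function of a quantized $\acinfo{\round-1}$. Writing $\txpow_\round \doteq \txpow^{\alpha_\round}$, the long-term power constraint (\ref{eq:pow_constraint}) combined with the inductive hypothesis $\pout(\round-1) \doteq \txpow^{-\diveronebit_{\round-1}(\rate)}$ forces $\alpha_\round \leq 1 + \diveronebit_{\round-1}(\rate)$, with this bound attainable to leading exponential order.

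For the inductive step I would decompose
\[
\pout(\round) = \int_0^\rate f_{\acinfo{\round-1}}(x)\,\myPr{\info_\round < \rate-x \,\big|\, \acinfo{\round-1}=x}\,dx,
\]
and characterize, in the $\doteq$ sense, both the density $f_{\acinfo{\round-1}}(x)$ and the conditional round-$\round$ outage exponent for each $x \in [0,\rate)$. For the latter, Theorem \ref{the:asym_BF} applied at power $\txpow_\round$ gives exponent $\alpha_\round\,\tdiver(\rate-x)$. For the former, $\acinfo{\round-1}$ is the average of $(\round-1)\noblock$ single-block mutual informations across the previous rounds; I would bound its tail by identifying which sub-blocks are in deep fade, obtaining an exponent that accrues $\noblock\notx\norx$ per fully-faded sub-block at each round, weighted by the induction hypotheses $\diveronebit_1,\ldots,\diveronebit_{\round-2}$. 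Minimising the sum of the two exponents over $x\in[0,\rate]$ should produce the recursion (\ref{eq:diver_1bit}).

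The main obstacle will be verifying that the optimising threshold is $x \to 0^+$ and that the corresponding fading configuration realises the precise coefficient $\noblock\notx\norx(\round-1+\sum_{\roundrun=1}^{\round-2}\diveronebit_\roundrun)$, together with the term $(1+\diveronebit_{\round-1})\diveronebit_1$ coming from round $\round$ having to deliver the full rate $\rate$ from scratch at inflated power $\txpow^{1+\diveronebit_{\round-1}(\rate)}$. Because one-bit feedback cannot distinguish the easy case $x$ near $\rate$ from the hard case $x$ near $0$, the power $\txpow_\round$ must be sized for the worst $x$, and the converse reduces to checking that no other threshold and no other power profile beats this choice. Achievability then mirrors the random-coding argument of Theorem \ref{the:achieve_Kbit}, invoking $\rdiver(\rate) = \tdiver(\rate)$ whenever $\frac{\noblock\rate}{\consize}$ is non-integer so that the converse and direct exponents coincide.
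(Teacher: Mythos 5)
Your proposal is correct and follows essentially the same route as the paper, whose proof of this theorem is just a pointer to the arguments of Theorems \ref{the:out_diver_K} and \ref{the:achieve_Kbit} specialized to the two thresholds $0$ and $\rate$: the same induction, the same decomposition of $\myPr{\acinfo{\round}<\rate}$ over the value of $\acinfo{\round-1}$ with Theorem \ref{the:asym_BF} supplying the per-round exponents, the key observation that with one-bit feedback the round-$\round$ power exponent is the single value $1+\diveronebit_{\round-1}(\rate)$ rather than being adapted to the quantized $\acinfo{\round-1}$, and the minimization showing the dominant event is all previous rounds in deep fade (cost $\noblock\notx\norx\sum_{\roundrun=1}^{\round-1}(1+\diveronebit_{\roundrun-1})$) followed by a full-rate outage at inflated power. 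You have correctly identified the one remaining verification (that the minimizer sits at $x\to 0^{+}$, unlike the symmetric $j=0$ or $j=\stair$ situation in the multi-bit case) and the role of $\rdiver=\tdiver$ at non-integer $\frac{\noblock\rate}{\consize}$ for achievability, so nothing essential is missing.
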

\begin{proof}
The proof follows the same arguments as that of Theorems \ref{the:out_diver_K} and \ref{the:achieve_Kbit}, with only two feedback levels at 0 and $\rate$, respectively. 
\end{proof}
The outage diversity and optimal SNR-exponent of the INR-ARQ system with $\nolevel =2$ at rate $\rate$ is thus given by $\diveronebit_{\noround}(\rate)$ when $\frac{\noblock\rate}{\consize}$ is not an integer. 

%

%
\subsection{Numerical Results}
We numerically compare the optimal rate-diversity-delay tradeoff of INR-ARQ systems with $\nolevel \geq \left\lceil\frac{\noblock\rate}{\consize}\right\rceil+1$, and with $\nolevel=2$ as well as the optimal tradeoff of an INR-ARQ system with constant transmit power.  The optimal rate-diversity-delay tradeoff  $\diver_{\noround}(\rate)$ and $\diveronebit_{\noround}(\rate)$ for INR-ARQ transmission with $\noround=1, 2, 3$ over the MIMO block-fading channel with $\notx=\norx=\noblock=2$ are illustrated in Figure \ref{fig:long_term}.      

  For an INR-ARQ system with delay constraint $\noround$ and constant transmit power (short-term power constraint), the outage probability $\pout(\noround)$ is the same as that obtained by transmission with rate $\frac{\rate}{\noround}$ over a block-fading channel with $\noblock\noround$ fading blocks \cite{ChuangGuillenRasmussenCollings2008}.  From Theorem \ref{the:asym_BF}, the optimal outage diversity $\underline{\diver}_{\noround}(\rate)$ is given by\footnote{The rate-diversity-delay tradeoff of \cite{ChuangGuillenRasmussenCollings2008} is larger than that given in (\ref{eq:diver_uniform}) since it is obtained with rotations, which increase the constellation size, complexity and peak-to-average power ratio.}
\begin{equation}
\label{eq:diver_uniform}
\divershort_{\noround}(\rate)= \norx\left(1+\left\lfloor\noblock\noround\left(\notx-\frac{\rate}{\noround\consize}\right)\right\rfloor\right), 
\end{equation}
and is achievable by random codes for all rates $\rate$ such that $\divershort_{\noround}(\rate)$ is continuous.  The rate-diversity-delay tradeoff of the INR-ARQ system with constant transmit power is plotted in Figure \ref{fig:short_term}.   Figure \ref{fig:RDDT} shows  an order-of-magnitude improvement in outage diversity of INR-ARQ  when a long-term power constraint is allowed.   Furthermore,  significant gains in outage diversity are provided by multi-bit feedback, especially at transmission rates $\rate$ close to $\notx\consize$.  Since high $\rate$ is particularly relevant in ARQ systems, the result suggests that multi-bit feedback will give significant gains in practical implementations. 
\section{Power Adaptation and Feedback Design}
\label{sec:power-adapt-feedb}
The design of optimal feedback and transmission rules for an ARQ system with multi-bit feedback includes joint optimization of the overall set of quantization thresholds $\{\fbthre(\fbvec_{\noround-1}), \fbvec_{\noround-1} \in \{0, \ldots, \nolevel-2\}^{\noround-1}\}$ and the corresponding power adaptive rule $\txpow_{\round}(\fbvec_{\round-1})$.  The optimal feedback and power adaptation rule is obtained by minimizing
\begin{equation}
\sum_{\fbvec_{\noround-1} } \prfb(\fbvec_{\noround-1}) \pout(\noround|\fbvec_{\noround-1})
\end{equation}
subject to the power constraint in (\ref{eq:pow_constraint}). 
To the best of our knowledge, the optimization problem is not analytically tractable.  We therefore propose to partition the design problem into two steps. 
\begin{description}
\item[Step 1:] \hspace{1mm} At round $\round$, determine a set of feedback thresholds $\fbthre([\fbvec_{\round-1}, \fb])$ for every feedback vector $\fbvec_{\round-1} \in \{1, \ldots, \nolevel-2\}^{\round-1}$.
\item[Step 2:] \hspace{1mm} Given the set of feedback thresholds in Step 1, determine the corresponding transmit power rule, minimizing the outage probability.
\end{description}
The above procedure suboptimally partitions the joint optimization problem into two sequential problems.  Moreover, in the following, each individual problem is also suboptimally solved.  
Nevertheless, this design procedure leads to a practically implementable algorithm that achieves  the optimal diversity derived in the previous section. 
\subsection{Selecting the Set of Feedback Thresholds}
\label{sec:select-set-feedb}
From the observations in Remark \ref{rem:set_thres}, we propose the following choice of feedback thresholds.  Consider the feedback levels at round $\round$ for a given feedback vector $\fbvec_{\round-1}$. Let $\tau \triangleq \left\lfloor \frac{\noblock\rate}{\consize}\right\rfloor$, $\stairthre_{\stair}= \frac{\consize \stair}{\noblock}$ and $\stair' \triangleq \left\lfloor \frac{\noblock \fbthre(\fbvec_{\round-1})}{\consize}\right\rfloor$.  The feedback thresholds in round $\round$, given $\fbvec_{\round-1}$ is then determined as follows. 

\begin{enumerate}
\item Place a threshold at $\fbthre([\fbvec_{\round-1}, 0]) = \fbthre(\fbvec_{\round-1})$, and at $\fbthre([\fbvec_{\round-1}, \nolevel-1])= R$;
\item Place $\stairrate-\stair'$ thresholds at
$\stairthre_{\stair}, \stair= \stair'+1, \ldots, \tau$;
\item Place the remaining $\nolevel-2-\stairrate+\stair'$ thresholds sequentially within
\[ \hspace{-0.2 in} \setopen{\stairthre_{\stairrate}, \rate},
\setopen{\fbthre(\fbvec_{\round-1}), \stairthre_{\stair'+1}},
\setopen{\stairthre_{\stairrate-1}, \stairthre_{\stairrate}},
\setopen{\stairthre_{\stair'+1}, \stairthre_{\stair'+2}},
\ldots\] 
until no more thresholds are left to place, and such that the thresholds uniformly partition each region.
\end{enumerate}
The procedure for choosing the thresholds $\fbthre(\fbvec_{\round})$, given the feedback vector $\fbvec_{\round-1}$, is illustrated in Figure \ref{fig:threshold_design}.   More particularly, the feedback thresholds for INR-ARQ transmission over the block-fading channel with $\notx=\norx=1$, $\noblock=2$, $\nolevel = 4$, $\noround=2$, and $\rate= 3.5$ using 16-QAM constellations are illustrated in Figure \ref{fig:threshold}, where $\fbthre(\fbvec_{\round-1})= \fbthre([\fbvec_{\round-1},0])$, and the values of $\fbthre(\fbvec_2)$ are reported in Table \ref{tab:example_fbthre}. 

\subsection{Power Adaptation}
\label{sec:power-allocation}
The suboptimal power adaptation rule is obtained from the following  simplifications. 
\begin{itemize}
\item  We consider a power constraint more stringent than the constraint in  (\ref{eq:pow_constraint}), 
\begin{equation}
\label{eq:stringent_pow_constraint}
\sum_{\fbvec_{\round} \in \{0, \ldots, \nolevel-1\}^{\round}}\prfb(\fbvec_{\round}) \txpow_{\round+1}(\fbvec_{\round})\leq \frac{\txpow}{\noround}, 
\end{equation} 
for $\fbvec_{\round} \in \{0, \ldots, \nolevel-1\}^{\round}, \round=0, \ldots, \noround-1$, where $\prfb(\fbvec_0)=1$ by definition.
 \item When feedback $\fbvec_{\round-1}$ is received, we have that $\acinfo{\round-1} \geq \fbthre(\fbvec_{\round-1})$.  Then, the feedback probability is approximated from (\ref{eq:prfb}) by replacing  $\acinfo{\round-1}$ with $\fbthre(\fbvec_{\round-1})$; and the outage probability can be upper bounded as 
\begin{equation}
\upout(\round|\fbvec_{\round-1}) \triangleq \myPr{\info_{\round}+ \fbthre(\fbvec_{\round-1}) < \rate}, 
\end{equation}
where $\info_{\round}$ is given by (\ref{eq:info_1round}) with $\txpow_{\round}= \txpow_{\round}( \fbvec_{\round-1})$. 
\item To further simplify the problem, we consider minimizing  $\upout(\round), \round=1, \ldots, \noround$ sequentially. 
 \end{itemize}
Based on the simplifications, the corresponding power adaptation rule $\txpow_{\round}(\fbvec_{\round-1})$ is obtained by solving 
\begin{equation}
\label{eq:opt_powallocate}
\begin{cases}
{\rm Minimize}& \sum_{\fbvec_{\round-1}}\prfb(\fbvec_{\round-1}) \upout(\round|\fbvec_{\round-1})\\
{\rm Subject ~ to}& \sum_{\fbvec_{\round-1}}\prfb(\fbvec_{\round-1})\txpow_{\round}(\fbvec_{\round-1}) \leq \frac{\txpow}{\noround}.
\end{cases}
\end{equation}
 The optimization problem is separable, and thus can be solved via a branch-and-bound simplex  algorithm using piece-wise linear approximation \cite{BazaraaSheraliShetty}.  For single-input multiple-output (SIMO) channels, the probabilities $\prfb(\fbvec_{\round-1})$ and $\upout(\round|\fbvec_{\round-1})$ in (\ref{eq:opt_powallocate}) can be approximated numerically by shifting the outage probability bounds in \cite{NguyenGuillenRasmussen2007it} according to the gap between the bound and the corresponding simulation curve at high SNR.  For MIMO channels, solving (\ref{eq:opt_powallocate}) requires tabulating the probabilities $\prfb(\fbvec_{\round-1})$ and $\upout(\round|\fbvec_{\round-1})$, which can be obtained from Monte-Carlo simulations. 

\subsection{Numerical Results}
\label{sec:numerical-results}
First consider SISO ($\notx=\norx=1$) INR-ARQ transmission with $\noround =2$ at rate $\rate=3.5$ over the block-fading channel in (\ref{eq:channel_model}) with  $\noblock =2$ using $16$-QAM input constellations.  The outage performance of systems with $\nolevel = 2, 3, 8, 16$ is illustrated in Figure \ref{fig:SISO}.  We observe that the outage diversity achieved by constant transmit power and by power adaptation for $\nolevel=2$  is $3$ and $4$ as given in (\ref{eq:diver_uniform}) and (\ref{eq:diver_1bit}), respectively.  For $\nolevel \geq 3$, the outage diversity is 5 as predicted from (\ref{eq:out_diver_K}).  This leads to significant improvement in outage performance for power adaptive ARQ transmission with multi-bit feedback at high $\power$.   Particularly, 2 dB gain in power is observed at outage probability $10^{-6}$ for $\nolevel \geq 8$.  Note that at low $\power$, the outage performance of systems with $\nolevel=2$ is outperformed by system with constant transmit power due to the simplifying assumption (\ref{eq:stringent_pow_constraint}). 

The outage performance of MIMO INR-ARQ transmission over the block-fading channel in (\ref{eq:channel_model}) with $\notx=2, \norx=1, \noblock=1, \rate= 7.5$ using 16-QAM input constellations is illustrated in Figures \ref{fig:MIMO_sim} and \ref{fig:MIMO_bound}, where Figure \ref{fig:MIMO_sim} shows the simulation results, and Figure \ref{fig:MIMO_bound} presents the upper bound obtained from  (\ref{eq:opt_powallocate}).  The simulation results in Figure \ref{fig:MIMO_sim} have yet to show the correct outage diversity ($\diver_\noround(\rate)= 5$ for $\nolevel \geq 3$ and $\diver_\noround(\rate)= 4$ for $\nolevel=2$).  However, they follow the bounds from (\ref{eq:opt_powallocate}), which approach the optimal outage diversity at higher SNR as shown in Figure \ref{fig:MIMO_bound}. 
Figure \ref{fig:MIMO_sim} shows that systems with power allocation significantly outperform that with constant transmit power.  Moreover, allowing additional feedback levels ($\nolevel \geq 3$) provides further gains in outage diversity and thus significant gains in outage performance at high SNR. 

In both cases, the simulation results suggest that increasing $\nolevel$ beyond 8 does not substantially improve the outage performance; and thus, even for $\nolevel=3$, the suboptimal choice of feedback thresholds in Section \ref{sec:select-set-feedb} performs within 1dB of systems with large $\nolevel$ and optimal thresholds.   
\section{Conclusions}
\label{sec:conclusion}
We have studied the outage performance of MIMO block-fading channels with and without employing the INR-ARQ strategy.  An information-theoretic multi-bit feedback INR-ARQ scheme is proposed based on the accumulative mutual information, which potentially improves the performance of INR-ARQ transmission with minimal extra overhead requirement compared to classical INR-ARQ.  The study on power adaptation  has revealed large gains in outage diversity provided by multi-bit feedback in INR-ARQ systems with a long-term power constraint.  More generally, the multi-bit feedback INR-ARQ based on accumulated mutual information may prove useful in obtaining the fundamental limit of multi-bit feedback INR-ARQ systems.  Furthermore, since the proposed scheme is a generalization to that in \cite{VisotskySunTripathi2005} and \cite{SteinerShamai2008}, it promises further gain from the throughput performance obtained in \cite{VisotskySunTripathi2005,SteinerShamai2008}.

\appendices
\section{Proof of Theorem \ref{the:asym_BF}}
\label{sec:proof-theor-refpr}
We first assume a genie-aided receiver that perfectly eliminates the interference between the transmit antennas. This results in $\notx$ parallel SIMO block-fading channels, each with $\norx$ receive antennas.  Let $\infoga$ be the realized input-output mutual information of the genie-aided channel, then $\infoga \geq \info$. Furthermore, from the analysis in \cite{KnoppHumblet2000,GuillenCaire2006,NguyenGuillenRasmussen2007it}, we have that 
\begin{equation}
\myPr{\infoga < \rate} \doteq \power^{-\tdiver(\rate)}. 
\end{equation}
Therefore, 
\begin{equation}
\myPr{\info< \rate} \dot{\geq} \power^{-\tdiver(\rate)}. 
\end{equation}
The proof is thus completed by proving that 
\begin{equation}
\label{eq:proof_achieve}
\myPr{\info \leq \rate} \doteq \power^{-\rdiver(\rate)}. 
\end{equation}
Following the arguments  in \cite{KnoppHumblet2000,GuillenCaire2006,NguyenGuillenRasmussen2007it}, we have that 
\begin{equation}
    \myPr{\infoga \leq \rate} \doteq \power^{-\rdiver(\rate)}
\end{equation}
and therefore, 
\begin{equation}
\label{eq:lowbound_achieve}
\myPr{\info\leq \rate} \dot{\geq} \power^{-\rdiver(\rate)}. 
\end{equation}

We now prove that $\myPr{\info \leq \rate} \dot{\leq} \power^{-\rdiver(\rate)}$.  Considering transmission over the block-fading channel in (\ref{eq:channel_model}) with random codes of rate $\rate$, where the $\Blength\noblock\notx$ coded symbols in $\boldsymbol{x}$ are drawn uniformly random from the constellation $\conste$. Let $\rprober$ be the word error probability achieved by random coding. We have from the random-coding achievability and the strong converse theorem \cite{CoverThomas2006,Arimoto1973,MalkamakiLeib1999} that for a channel realization $\chmat$, 
\begin{equation}
\rprober(\chmat)= \begin{cases}
1 &{\rm if~} \info < \rate\\
0 &{\rm if~} \info > \rate
\end{cases}
\end{equation}
when $\Blength \to \infty$.  Therefore, the word error probability of random codes satisfies
\begin{equation}
\label{eq:achieve_eq_rand}
\rprober = \myPr{\info \leq \rate}. 
\end{equation}
We now prove that $\rprober \dot{\leq} \power_{\round}^{-\rdiver(\info)}$.  Consider encoding and transmitting a message $m$ as a random codeword $\txsig$. Assuming that the channel realization is $\chmat$, the pairwise error probability between $\txsig$ and $\txsig'$ is bounded by \cite{ViterbiOmura1979}
\begin{equation}
\label{eq:PEP}
\PEP\left(\txsig \rightarrow \txsig'|\chmat\right)  \leq \exp\left(-\frac{1}{4}\dist^2(\txsig, \txsig', \chmat)\right), 
\end{equation}
where, by letting $\bartxpow= \frac{\txpow_{\round}}{\notx}$,
\begin{equation}
\dist^2(\txsig, \txsig', \chmat)= \sum_{\block=1}^{\noblock}\sum_{\cuind=1}^{\Blength}\sum_{\rxind=1}^{\norx}\left|\sum_{\txind=1}^{\notx}\sqrt{\bartxpow}h_{\block,\txind,\rxind}(\txsig_{\block,\txind,\cuind}-\txsig'_{\block,\txind,\cuind})\right|^2. 
\end{equation}
Here, $h_{\block,\txind,\rxind}$ is the channel gain from transmit antenna $\txind$ to receive antenna $\rxind$ in block $\block$, and $X_{\block,\txind,\cuind}$ is the coded symbol transmitted by antenna $\txind$ at time instant $\cuind$ of block $\block$.  Let us write $h_{\block, \txind,\rxind}= |h_{\block, \txind,\rxind}|e^{i\theta_{\block, \txind,\rxind}}$, where $i=\sqrt{-1}$. Further define a matrix of normalized fading gains $\boldsymbol{\alpha} \in \mathbb{R}^{\noblock\times\notx\times\norx}$ where $\alpha_{\block, \txind,\rxind}\triangleq -\frac{\log(|h_{\block, \txind,\rxind}|^2)}{\log(\bartxpow)}$, then
\begin{equation}
\dist^2(\txsig, \txsig', \chmat)= \sum_{\block=1}^\noblock\sum_{\cuind=1}^{\Blength}\sum_{\rxind=1}^{\norx}\left|\sum_{\txind=1}^{\notx}\bartxpow^{\frac{1-\alpha_{\block,\txind,\rxind}}{2}}e^{i\theta_{\block,\txind,\rxind}}(\txsig_{\block, \txind,\cuind}-\txsig'_{\block, \txind,\cuind})\right|^2. 
\end{equation}
By averaging (\ref{eq:PEP}) over the random coding ensemble, the pairwise error probability of random codes is
\begin{align}
\PEP^{(r)}(\txsig\rightarrow \txsig'|\chmat) &\leq \prod_{\block=1}^{\noblock}\left\{\frac{1}{2^{2\consize\notx}} \sum_{\mathbf{x} \in \conste^{\notx}}\sum_{\mathbf{x'}\in \conste^{\notx}}\exp\left(-\frac{1}{4}\sum_{\rxind=1}^{\norx}\left|\sum_{\txind=1}^{\notx}\bartxpow^{\frac{1-\alpha_{\block,\txind,\rxind}}{2}}e^{i\theta_{\block,\txind,\rxind}}(x_t-x_t')\right|^2\right)\right\}^{\Blength}\\
&\leq \exp\left(\noblock\consize\Blength\log(2)\left(-2\notx+\frac{1}{\noblock\consize}T(\bartxpow, \boldsymbol{\alpha})\right)\right), 
\end{align}
where $x_t$ is the $t^{\rm th}$ entry of vector $\mathbf{x}$ and
\begin{equation}
\label{eq:T_def}
T(\bartxpow, \boldsymbol{\alpha})\triangleq \sum_{\block=1}^\noblock\log_2 \left(\sum_{\mathbf{x} \in \conste^{\notx}}\sum_{\mathbf{x}' \in \conste^{\notx}} \exp\left(-\frac{1}{4}\sum_{\rxind=1}^{\norx}\left|\sum_{\txind=1}^{\notx} \bartxpow^{\frac{1-\alpha_{\block,\txind,\rxind}}{2}}e^{i\theta_{\block,\txind,\rxind}}(x_t-x'_t)\right|^2\right)\right).
\end{equation}
By summing over the $2^{\noblock\rate\Blength}-1$ possible error events, the union bound on the word error probability is given by 
\begin{equation}
\label{eq:union_Pe}
\rprober(\chmat)\leq \min\left\{1, \exp\left(\noblock\consize\Blength\log(2)\left(-2\notx + \frac{\rate}{\consize}+ \frac{1}{\noblock \consize}T(\bartxpow, \boldsymbol{\alpha})\right)\right)\right\}.
\end{equation}

For any $\epsilon > 0$, denote $\goodset_\block \triangleq \bigcup_{\rxind=1}^{\norx}\goodset_{\block,\rxind}$, and $\kappa_\block \triangleq|\goodset_\block|$, where
\begin{equation}
\goodset_{\block, \rxind}\triangleq \{\txind:\alpha_{\block, \txind,\rxind} \leq 1-\epsilon, \txind= 1, \ldots,\notx\}.
\end{equation}
Then, for any given $\rxind \in \{1, \ldots, \norx\}$, and letting $\alpha_{\block,\rxind}= \max\{\alpha_{\block,\txind,\rxind}, \txind \in \goodset_{\block,\rxind}\}$, we can write
\begin{align}
\lim_{\bartxpow\to \infty}\sum_{\txind=1}^{\notx}\bartxpow^{\frac{1-\alpha_{\block, \txind,\rxind}}{2}}e^{i\theta_{\block, \txind,\rxind}}(x_{\txind}-x_{\txind}') &\geq \lim_{\bartxpow \to \infty}\sum_{\substack{\txind \in \goodset_{\block,\rxind}\\ x_{\txind} \neq x_{\txind}'}} \bartxpow^{\frac{1-\alpha_{\block,\txind,\rxind}}{2}}e^{i\theta_{\block, \txind,\rxind}}(x_{\txind}-x_{\txind}')\\
&\geq \lim_{\bartxpow \to \infty} \bartxpow^{\frac{1-\alpha_{\block,\rxind}}{2}}\sum_{\substack{\txind \in \goodset_{\block,\rxind}\\ x_{\txind} \neq x_{\txind}'}} e^{i\theta_{\block,\txind,\rxind}}(x_{\txind}-x_{\txind}').
\end{align}
Since  the $\theta_{\block,\txind,\rxind}$'s are uniformly drawn from $[-\pi,\pi]$, we have that 
\begin{equation}
\lim_{\bartxpow\to \infty}\sum_{\txind=1}^{\notx}\bartxpow^{\frac{1-\alpha_{\block, \txind,\rxind}}{2}}e^{i\theta_{\block,\txind,\rxind}}(x_{\txind}-x_{\txind}') = \infty
\end{equation} 
with probability 1 if there exists $\txind \in \goodset_{\block,\rxind}$ such that $x_\txind \neq x_\txind'$.  Noting that $\kappa_\block= |\goodset_b|$, it follows from (\ref{eq:T_def}) that
\begin{align}
\lim_{\bartxpow \to \infty} T(\bartxpow, \boldsymbol{\alpha})&= \sum_{\block=1}^{\noblock}\lim_{\bartxpow\to\infty} \log_2\left(\sum_{\mathbf{x}\in \conste^{\notx}}\sum_{\substack{\mathbf{x'}\in \conste^\notx\\x_{\txind}'= x_\txind, \forall t \in \goodset_\block}} \exp\left(-\frac{1}{4}\sum_{\rxind=1}^{\norx}\left|\sum_{\txind=1}^{\notx}\bartxpow^{\frac{1-\alpha_{\block,\txind,\rxind}}{2}}e^{i\theta_{\block,\txind,\rxind}}(x_\txind-x_\txind')\right|^2\right)\right)\notag\\
&\leq \sum_{\block=1}^{\noblock}\log_2\left(2^{\consize\notx}2^{\consize(\notx-\kappa_\block)}\right)\notag\\
&=\sum_{\block=1}^\noblock\consize(2\notx-\kappa_\block). 
\end{align}
Thus, the error probability in (\ref{eq:union_Pe}) is asymptotically upper-bounded by
\begin{equation}
\lim_{\bartxpow \to \infty}\rprober(\chmat) \leq \min\left\{1, \exp\left(-\noblock\consize\Blength\log(2)\left(\frac{1}{\noblock}\sum_{\block=1}^\noblock \kappa_\block-\frac{\rate}{\consize}\right)\right)\right\}.
\end{equation}
Let $\mathcal{B}^{(\epsilon)} \triangleq \left\{\boldsymbol{\alpha} \in \mathbb{R}^{\noblock\times\notx\times\norx}:\sum_{\block=1}^\noblock \kappa_\block \leq \frac{\noblock\rate}{\consize}\right\}$ be the outage set.  By averaging over the fading matrix and letting $\Blength \to \infty$, the error probability is bounded by
\begin{equation}
\rprober\leq \int_{\boldsymbol{\alpha} \in \mathcal{B}^{(\epsilon)}} f_{\boldsymbol{\alpha}}(\boldsymbol{\alpha})d\boldsymbol{\alpha}, 
\end{equation}
where $f_{\boldsymbol{\alpha}}(\boldsymbol{\alpha})$ is the joint pdf of the random vector $\boldsymbol{\alpha}$.  Following the analysis in \cite{GuillenCaire2006}, and letting $\Blength \to \infty$, the SNR-exponent for the case of using random codes is lower bounded by 
\begin{align}
\inf_{\boldsymbol{\alpha}\in \mathcal{B}^{(\epsilon)} \cap \mathbb{R}_+^{\noblock\norx \times \noblock\notx}}\left\{\sum_{\block=1}^{\noblock}\sum_{\txind=1}^{\notx}\sum_{\rxind=1}^{\norx} \alpha_{\block,\txind,\rxind}\right\} &= \norx\left(\noblock\notx-\left\lfloor \frac{\noblock\rate}{\consize}\right\rfloor\right)(1-\epsilon)\\&
 = \norx\left\lceil\noblock\left(\notx-\frac{\rate}{\consize}\right)\right\rceil (1-\epsilon).
\end{align}
Thus, by letting $\epsilon \downarrow 0$, the outage diversity $\rdiver(\rate)$ is achievable using random codes.  Therefore we have from (\ref{eq:achieve_eq_rand}) that 
\begin{equation}
\myPr{\info \leq \rate} \dot{\leq} \bartxpow^{-\rdiver(\rate)} \doteq \txpow^{-\rdiver(\rate)}.
\end{equation}
Thus, (\ref{eq:proof_achieve}) is obtained from (\ref{eq:lowbound_achieve}). 
\section{Proof of Theorem \ref{the:out_diver_K}}
\label{sec:proof-out-diver-K}

A sketch of the proof is given as follows.  We first lower-bound the outage diversity by considering a suboptimal ARQ system with $\underline{K} = \left\lceil\frac{\noblock \rate}{\consize}\right\rceil+1$ feedback levels, where the quantization thresholds are placed at $\fbthre([\fbvec_{\round-1}, \fb_{\round}]) = \frac{\fb_{\round} \consize}{\noblock}, \fb_{\round}= 0, \ldots, \left\lfloor\frac{B \rate}{\consize}\right\rfloor$.  Using Theorem \ref{the:asym_BF}, we prove by induction that the outage diversity of the suboptimal ARQ system at round $\round$ is $\diver_\round(\rate)$.  

Conversely, consider an optimal INR-ARQ system with $\nolevel \geq \left\lceil \frac{\noblock\rate}{\consize}\right\rceil+1$ feedback levels.  The outage performance of the system can be improved by adding  $\left\lfloor \frac{\noblock \rate}{\consize}\right\rfloor+1$  extra quantization thresholds (and corresponding feedback indices) at $\frac{\stair \consize}{\noblock}, \stair=0, \ldots, \left\lfloor\frac{\noblock\rate}{\consize}\right\rfloor$.  Using Theorem \ref{the:asym_BF}, we prove by induction that the outage diversity at round $\round$ of the improved systems (with $\nolevel+ \left\lfloor \frac{\noblock\rate}{\consize}\right\rfloor+1 $ feedback levels) is also given by $\diver_{\round}(\rate)$.  Therefore, $\diver_{\round}(\rate)$ is the optimal outage diversity at round $\round$ for an ARQ system with $\nolevel \geq \left\lceil\frac{\noblock \rate}{\consize}\right\rceil+1$  feedback levels. 
\subsection{Lower bound on the optimal outage diversity}
\label{sec:lower-bound-outage}
To get a lower bound to the outage diversity, consider an ARQ system with $K=\left\lceil\frac{\noblock\rate}{\consize}\right\rceil+1$ feedback levels, where the following (suboptimal) set of feedback thresholds is employed, 
\begin{equation}
\fbthre(\fbvec_{\round}) =\begin{cases} 
\stairthre_{\fb_{\round}}, &0 \leq \fb_{\round}< \nolevel-1\\
\rate, & \fb_{\round}= \nolevel-1, 
\end{cases}
\end{equation}
with $\stairthre_{\stair}= \frac{\stair \consize}{\noblock}$.  In this case, feedback index $\fb_{\round} =\stair$ is delivered at round $\round$ if $\acinfo{\round} \in \setopenr{\stairthre_{\stair}, \stairthre_{\stair+1}}$, regardless of the realized feedback indices of the previous rounds.  At round $\round$, the transmit power is  suboptimally adapted to the feedback index $\fb_{\round-1}$ as $\txpow_{\round} = \txpow_{\round}(\fb_{\round-1})$, where
\begin{equation}
\label{eq:sub_power_K}
\txpow_{\round}(\fb_{\round-1}) = \begin{cases}
\frac{\txpow}{\nolevel\noround\myPr{\acinfo{\round-1}\in \setopenr{\stairthre_{\fb_{\round-1}}, \stairthre_{\fb_{\round-1}+1}}}}, & \fb_{\round-1} < K-1\\
0, &{\rm otherwise}. 
\end{cases}
\end{equation}
The power adaptation rule in (\ref{eq:sub_power_K}) satisfies the power constraint in (\ref{eq:pow_constraint}).  We now derive the outage diversity achieved by the aforementioned system. 

For $\info \in \setopen{\stairthre_{\stair}, \stairthre_{\stair+1}}$, we have from Theorem \ref{the:asym_BF} that 
\begin{equation}
\label{eq:K_diver_round1}
\myPr{\info_1 < \info} \doteq \myPr{\info_{1} \in \setopenr{\stairthre_{\stair}, \stairthre_{\stair+1}}} \doteq \txpow^{-\Kdiverint_1(\stair)}, 
\end{equation}
where $\Kdiverint_{1}(\stair)\triangleq \diver(\stairthre_{\stair+1})= \norx(\noblock \notx -\stair)$. 

For $\stair= 0, \ldots, \noblock\notx-1$ and a given $\info \in \setopen{\stairthre_{\stair}, \stairthre_{\stair+1}}$, we now prove by induction that for $\round=1, \ldots, \noround$, 
\begin{equation}
\label{eq:K_diver_asump}
\myPr{\acinfo{\round} < \info} \doteq \myPr{\acinfo{\round} \in \setopenr{\stairthre_{\stair}, \stairthre_{\stair+1}}} \doteq \txpow^{-\Kdiverint_{\round}(\stair)}, 
\end{equation}
where $\Kdiverint_{\round}(\stair) = \diver_\round\left(\stairthre_{\stair+1}\right)$ is given in (\ref{eq:out_diver_K}). 

Equation (\ref{eq:K_diver_round1}) shows that (\ref{eq:K_diver_asump}) is correct at round 1.  Assume now that (\ref{eq:K_diver_asump}) is correct at round $\round$. From (\ref{eq:sub_power_K}) we have that 
\begin{equation}
\label{eq:txpow_asump}
\txpow_{\round+1}(\stair) = \frac{\txpow}{\nolevel\noround\myPr{\acinfo{\round} \in \setopenr{\stairthre_{\stair}, \stairthre_{\stair+1}}}} \doteq \txpow^{1+\Kdiverint_{\round}(\stair)}. 
\end{equation}
Therefore, for $\info \in \setopen{\stairthre_{\stair}, \stairthre_{\stair+1}}$, \begin{align}
&\myPr{\acinfo{\round+1} < \info} = \sum_{j=0}^{\stair}   \myPr{\acinfo{\round} \in \setopenr{\stairthre_j,   \stairthre_j+\info-\stairthre_{\stair}}}   \myPr{\info_{\round+1} < \info-\acinfo{\round}   \Big|\acinfo{\round} \in \setopenr{\stairthre_j,   \stairthre_j+\info-\stairthre_{\stair}}} + \notag\\ 
\label{eq:bound1}&\hspace{0.2 in}\sum_{j=0}^{\stair} \myPr{\acinfo{\round}\in \setopenr{\stairthre_j+\info-\stairthre_{\stair}, \stairthre_{j+1}}}\myPr{\info_{\round+1} < \info-\acinfo{\round}\Big|\acinfo{\round} \in \setopenr{\stairthre_j+\info-\stairthre_{\stair},\stairthre_{j+1}}}.
\end{align}
Given $\acinfo{\round} \in \setopenr{\stairthre_j, \stairthre_j+\info-\stairthre_{\stair}}$ and $\info \in \setopen{\stairthre_{\stair}, \stairthre_{\stair+1}}$, we have that $\info-\acinfo{\round} \in \setopen{\stairthre_{\stair-j}, \stairthre_{\stair-j+1}}$.  Therefore,  by applying Theorem \ref{the:asym_BF}, and noting the transmit power in (\ref{eq:txpow_asump}), we have that
\begin{equation}
\label{eq:upbound2}
\myPr{\info_{\round+1} < \info-\acinfo{\round}\Big|\acinfo{\round}
\in \setopenr{\stairthre_j, \stairthre_j+\info-\stairthre_{\stair}}}
\doteq \txpow^{-(1+\Kdiverint_{\round}(j))\norx(\noblock\notx-\stair+j)}.
\end{equation}
Since (\ref{eq:K_diver_asump}) is assumed at round $\round$, the first summation dominates in (\ref{eq:bound1}).  Thus from (\ref{eq:upbound2}), we have that 
\begin{equation}
\label{eq:asymp_exp_outage} \myPr{\acinfo{\round+1}<\info} \doteq \sum_{j=0}^t \txpow^{-\Kdiverint_{\round}(j)-\left[1+\Kdiverint_{\round}(j)\right]\norx(\noblock\notx-\stair+j)}. 
\end{equation}
The asymptotic exponent in (\ref{eq:asymp_exp_outage}) is given by
 \begin{align}
 &\hspace{-0.5 in}\min_{j=0, \ldots,  \stair}\Kdiverint_{\round}(j)+\left[1+\Kdiverint_{\round}(j)\right]\norx(\noblock\notx-\stair+j)\\ 
\label{eq:plug_assump}& = \min_{j=0, \ldots, \stair} -1+ (1+\noblock\norx\notx)^{\round-1}\left[1+\norx(\noblock\notx-j)\right]\left[1+\norx(\noblock\notx-\stair+j)\right]\\
\label{eq:takemin}&=-1+ (1+\noblock\notx\norx)^{\round}\left[1+\norx(\noblock\notx-\stair)\right]\\
&=\Kdiverint_{\round+1}(\stair),
\end{align}
where (\ref{eq:plug_assump}) follows from assumption $\Kdiverint_\round(j)= \diver_{\round}(\stairthre_{j+1})$ in (\ref{eq:K_diver_asump}), and (\ref{eq:takemin}) follows since the minimum in (\ref{eq:plug_assump}) is achieved with either $j=0$ or $j=t$.  Therefore, from (\ref{eq:asymp_exp_outage}), 
\begin{equation}
  \label{eq:first_con_induct_proof}
  \myPr{\acinfo{\round+1}< \info} \doteq \power^{-\Kdiverint_{\round+1}(\stair)}, 
\end{equation}
where $\Kdiverint_{\round+1}(\stair) = \diver_{\round+1}(\stairthre_{\stair+1})$ in (\ref{eq:out_diver_K}).  Thus, (\ref{eq:K_diver_asump}) is correct for $\round=1, \ldots, \noround$ by induction.  Consequently, for any $\rate \in \setopen{\stairthre_{\stairrate},\stairthre_{\stairrate+1}}$, we have that 
\begin{equation}
\myPr{\acinfo{\round} < \rate} \doteq \txpow^{-\Kdiverint_{\round}(\stairrate)} = \power^{-\diver_{\round}(\stairthre_{\stairrate+1})},
\end{equation}
and thus, the diversity in (\ref{eq:out_diver_K}) is achieved by the ARQ system with $\stairrate+2 = \left\lceil \frac{\noblock\rate}{\consize}\right\rceil+1$ feedback levels.  

Noting when $\myPr{\acinfo{\round+1} < \rate} \doteq \power^{-\Kdiverint_\round(\stairrate)}$, the outage probability at round $\round$ is dominated by the events with $j=0$ and $j=\stairrate$ in (\ref{eq:asymp_exp_outage}), which correspond to the events with $\acinfo{\round}\in \setopenr{0, \stairthre_1} \cup \setopenr{\stairthre_\stairrate, \rate}$.  The observation is useful for designing the feedback thresholds for the system, as summarized in Remark \ref{rem:set_thres}. 
\subsection{Upper bound on the optimal outage diversity}
Conversely, we derive an upper bound to the outage diversity achieved by a system with optimal feedback threshold $\fbthre(\fbvec_{\round})$ with $K$ levels per transmission round.  We first assume that $\rate \in \setopen{\stairthre_\stairrate,\stairthre_{\stairrate+1}}$ for some $\stairrate \in \{0, \ldots, \noblock\notx-1\}$. Consider improving the performance of the system by employing a feedback threshold set $\ifbthre{\fbvec_{\round}}$ with $\inolevel= K+\stairrate+1$ feedback levels per ARQ round by adding $\stairrate+1$ levels to the optimal feedback threshold set $\{\fbthre(\fbvec_{\round})\}$. The extra $\stairrate+1$ levels are located at $\stairthre_{\stair}= \frac{\stair M}{\noblock}, \stair=0, \ldots, \stairrate$.




Let $\fbthresetr{\round-1}(k) \triangleq \setopenr{\ifbthre{[\fbvec_{\round-1}, k]}, \ifbthre{[\fbvec_{\round-1}, k+1]}}, \round = 1, \ldots, \noround, k=0, \ldots, \inolevel-2$, and further let $\fbthresetl{\round-1}(k) \triangleq \setopen{\ifbthre{[\fbvec_{\round-1}, k]}, \ifbthre{[\fbvec_{\round-1}, k+1]}}$.   Then, given that the  feedback  vector at round $\round-1$ is $\fbvec_{\round-1}$, the receiver delivers feedback index $\inolevel-1$ if $\acinfo{\round} \geq \ifbthre{[\fbvec_{\round-1}, \inolevel-1]}= \rate$; otherwise, it delivers index $\fb_{\round}$, where $\fb_{\round}$ is chosen such that $\acinfo{\round} \in \fbthresetr{\round-1}(\fb_{\round})$.  

From the power constraint (\ref{eq:pow_constraint}), the optimal power allocation rule is upper-bounded by 
\begin{equation}
\label{eq:asymp_pow}
  \overline{\txpow}_\round(\fbvec_{\round-1}) =
  \begin{cases}
    \power, &\round=1\\    
    \frac{\power}{\myPr{\acinfo{\round-1} \in \fbthresetr{\round-2}(\fb_{\round-1})}}, &\fb_{\round-1} < \nolevel-1\\
    0, &{\rm otherwise.}
  \end{cases}
\end{equation}
Meanwhile, the power adaptation rule 
\begin{equation}
  \underline{\power}_{\round}(\fbvec_{\round-1}) =
  \begin{cases}
    \frac{\power}{\noround}, &\round=1\\    
    \frac{\power}{\inolevel\noround\myPr{\acinfo{\round-1} \in \fbthresetr{\round-2}(\fb_{\round-1})}}, &\fb_{\round-1} < \nolevel-1\\
    0, &{\rm otherwise.}
  \end{cases}
\end{equation}
satisfies the power constraint in (\ref{eq:pow_constraint}).  Therefore, the optimal power allocation rule asymptotically satisfies $ \txpow_{\round}(\fbvec_{\round-1}) \doteq \overline{\txpow}_{\round}(\fbvec_{\round-1})$ given in (\ref{eq:asymp_pow}). 


For $\stair=0, \ldots, \stairrate$, let $\indexset{\round-1}(\stair)= \left\{k \in \{1, \ldots, \inolevel-2\}: \fbthresetr{\round-1}(k) \subseteq \setopenr{\stairthre_{\stair}, \stairthre_{\stair+1}}\right\}$. Since $\stairthre_{\stair},$ for $\stair= 1, \ldots, \stairrate$, belongs to the set of thresholds $\{\ifbthre{[\fbvec_{\round-1}, \fb_{\round}]}, \fb_{\round}= 0, \ldots, \inolevel-1\}$,  $\fbthresetl{\round-1}(k) \subseteq \setopen{\stairthre_{\stair}, \stairthre_{\stair} +1}$ for some $\stair  \in \{1, \ldots, \stairrate\}$.  
Applying Theorem \ref{the:asym_BF}, 
for any $\info \in \setopen{\stairthre_{\stair}, \stairthre_{\stair+1}}$ and $k \in \indexset{0}(\stair)$, we have that
\begin{align}
\myPr{\info_1 < \info}  & \doteq \txpow^{-\norx(\noblock\notx-\stair)} \doteq \txpow^{-\Kdiverint_1(\stair)}\\
\myPr{\info_1 \in \fbthresetr{0}(k)}&\doteq \myPr{\info < \ifbthre{[k+1]}} \doteq \txpow^{-\Kdiverint_1(\stair)},
\end{align}
where $\Kdiverint_1(\stair)= \diver_1(\stairthre_{\stair+1})$ given in (\ref{eq:out_diver_K}).

 For the induction proof, assume that when $\info \in \setopen{\stairthre_\stair, \stairthre_{\stair+1}}$ and $\fb \in \indexset{\round-1}(t)$, we have
 \begin{equation}
 \label{eq:induct_assump}
 \myPr{\acinfo{\round} < \info} \doteq \myPr{\acinfo{\round}
 \in \fbthresetr{\round-1}(k)} \doteq \txpow^{-\Kdiverint_{\round}(\stair)}, 
 \end{equation}
where $\Kdiverint_\round(\stair)= \diver_\round(\stairthre_{\stair+1})$ given in (\ref{eq:out_diver_K}). 
 The assumption is correct for $\round=1$. We prove that (\ref{eq:induct_assump}) is also valid at round $\round+1$.  In fact, considering $\info \in \setopen{\stairthre_\stair,\stairthre_{\stair+1}}$, we have
\begin{align}
&\myPr{\acinfo{\round+1} < \info} = \sum_{j=0}^{\stair} \myPr{\acinfo{\round} \in \setopenr{\stairthre_{j}, \stairthre_{j}+\info-\stairthre_{\stair}}} \myPr{\info_{\round+1} < \info-\acinfo{\round}\Big|\acinfo{\round} \in \setopenr{\stairthre_{j}, \stairthre_{j}+\info-\stairthre_{\stair}}}\notag\\
&\hspace{0.7 in}+\sum_{j=0}^{\stair} \myPr{\acinfo{\round} \in \setopenr{\stairthre_{j}+\info-\stairthre_{\stair}, \stairthre_{j+1}}} \myPr{\info_{\round+1} < \info-\acinfo{\round}\Big|\acinfo{\round} \in \setopenr{\stairthre_{j}+\info-\stairthre_{\stair}, \stairthre_{j+1}}}. \notag
\end{align}

From assumption (\ref{eq:induct_assump}) and power allocation rule (\ref{eq:asymp_pow}),  when $\acinfo{\round} \in \fbthresetr{\round-1}(\fb_{\round})$, the transmit power in round $\round+1$ is $\txpow_{\round+1}\doteq\frac{\txpow}{\myPr{\acinfo{\round} \in \fbthresetr{\round-1}(\fb_{\round})}} \doteq \txpow^{1+\Kdiverint_{\round}(j)}$  for all $\fb_{\round} \in \indexset{\round}(j)$. Therefore, when $\acinfo{\round} \in \setopenr{\stairthre_{j}, \stairthre_{j+1}}$, $\txpow_{\round+1} \doteq \txpow^{1+\Kdiverint_{\round}(j)}$.  Thus, with similar arguments that are used to derive (\ref{eq:upbound2}), we have that 
 \begin{equation}
  \myPr{\acinfo{\round+1}<\info} \doteq \sum_{j=0}^t \txpow^{-(\Kdiverint_{\round}(j)+(1+\Kdiverint_{\round}(j))\norx(\noblock\notx-\stair+j))} 
 \end{equation}
 as given in (\ref{eq:asymp_exp_outage}). Therefore, following the steps used to derive (\ref{eq:first_con_induct_proof}), we have that 
 \begin{equation}
   \label{eq:ind_proof_thre}
   \myPr{\acinfo{\round+1} < \info} \doteq \txpow^{-\Kdiverint_{\round+1}(\stair)}
 \end{equation}
for $\info \in \setopen{\stairthre_{\stair}, \stairthre_{\stair+1}}$. It follows that
\begin{equation}
\label{eq:ind_proof_set}
\myPr{\acinfo{\round+1} \in \fbthresetr{\round}(k)} \doteq \myPr{\acinfo{\round+1} < \ifbthre{[\fbvec_{\round}, k]}} \doteq \txpow^{-\Kdiverint_{\round+1}(\stair)}
\end{equation}
 for all $k \in \indexset{\round}(\stair)$.  The results in (\ref{eq:ind_proof_thre}) and (\ref{eq:ind_proof_set}) prove that assumption (\ref{eq:induct_assump}) is valid at round $\round+1$, and thus by mathematical induction, (\ref{eq:induct_assump}) is valid for $\round=1, \ldots, \noround$. 

 Since $\rate \in \setopen{\stairthre_{\stairrate}, \stairthre_{\stairrate+1}}$,    
 \begin{equation}
\myPr{\acinfo{\round} < \rate} \doteq \txpow^{-\Kdiverint_{\round}(\stairrate)} \doteq \txpow^{-\diver_\round(\stairthre_{\stairrate+1})}, 
 \end{equation}
which proves that the outage diversity of the system with $\inolevel-$level feedback is the same as that given in (\ref{eq:out_diver_K}). 

\bibliographystyle{IEEEtran}
\bibliography{bib_database}

\newpage

\begin{table}
\caption{Feedback thresholds for $\notx=\norx=1, \noblock =2, \noround=2, \rate = 3.5$.}
\label{tab:example_fbthre}
\begin{center}
\begin{tabular}{|c|c|c|c|}
\hline
&$\fb_2=0$&$\fb_2=1$&$\fb_2=2$\\
\hline
$\fb_1=0$&0&2&2.75\\
$\fb_1=1$&2&2.5&3.0\\
$\fb_1=2$&2.75&3.0&3.25\\
\hline
\end{tabular}
\end{center}
\end{table}

\newpage
\begin{figure}[tbf]
\centering
\input{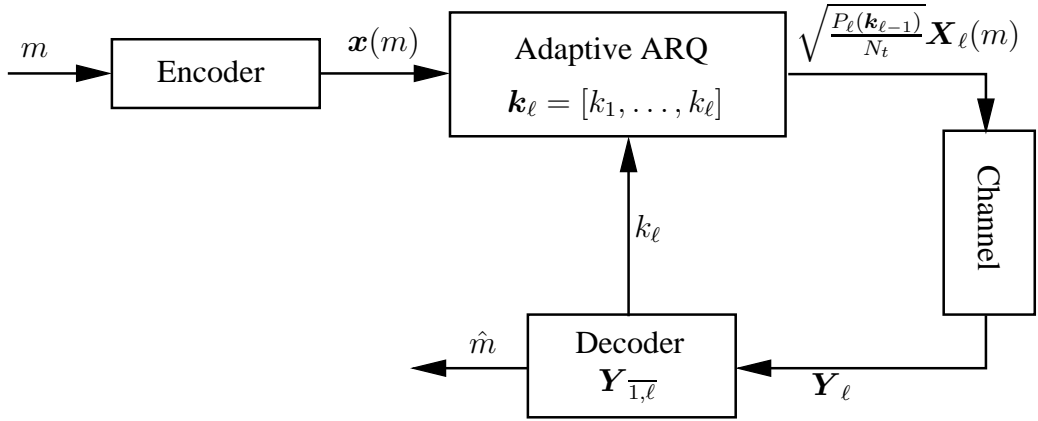}
\caption{The  INR-ARQ system with multi-bit feedback.}
\label{fig:ARQ_sys}
\end{figure}

\begin{figure}[tbf]
\centering
\input{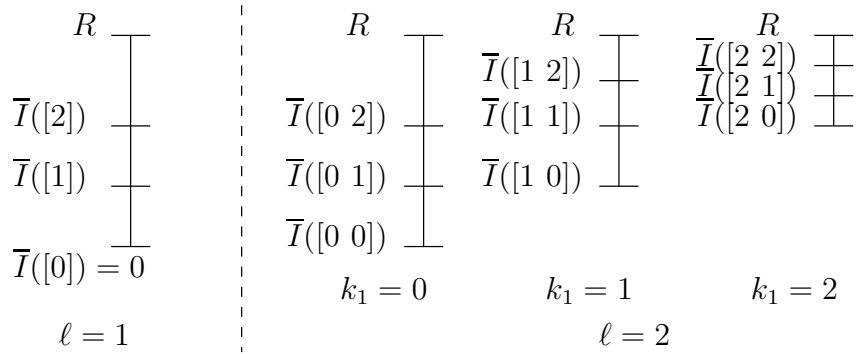}
\caption{An example of feedback thresholds.}
\label{fig:threshold}
\end{figure}

\begin{figure*}
\begin{center}
\subfigure[Long-term power constraint tradeoff.]{
\includegraphics[width=0.75\columnwidth]{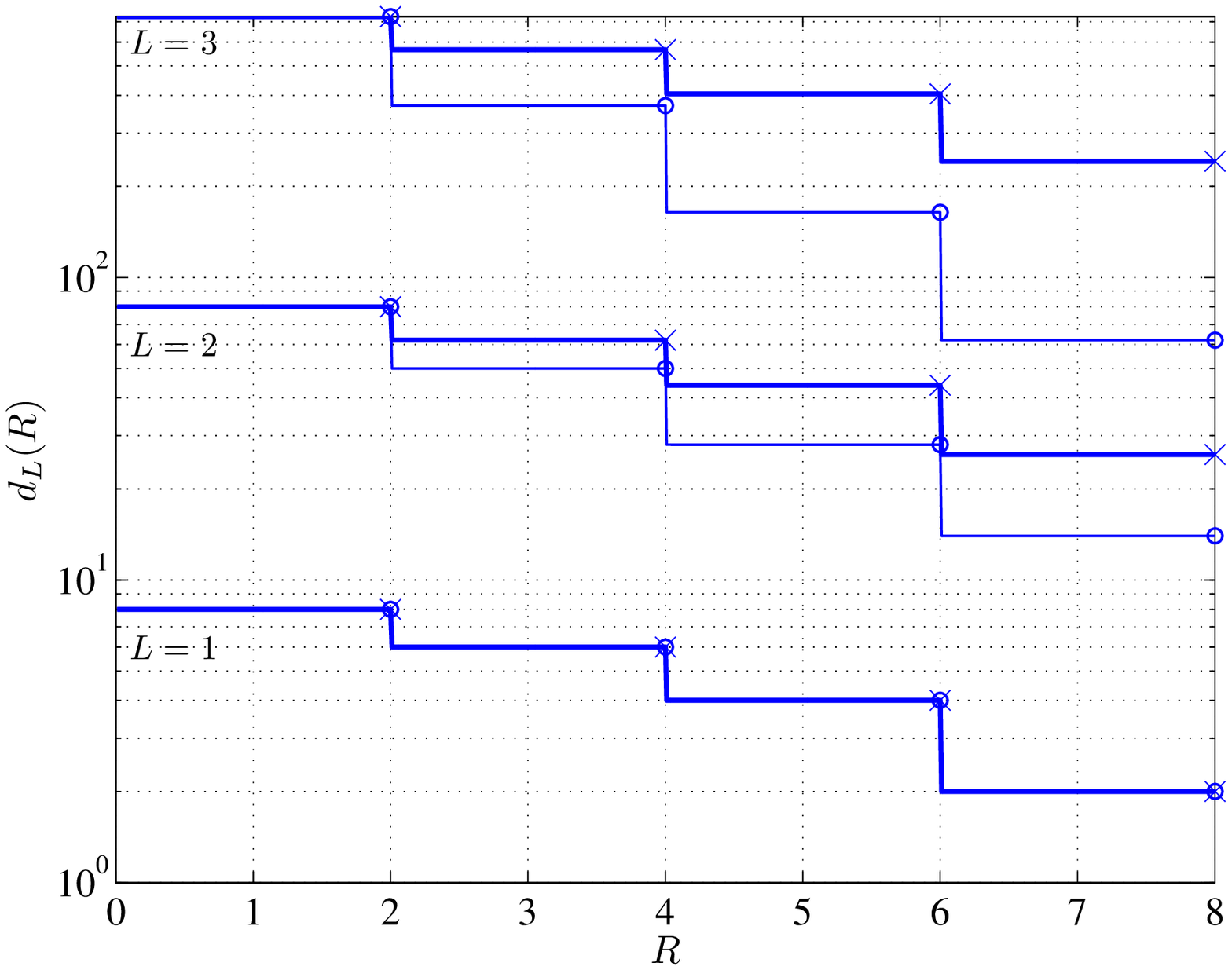}\label{fig:long_term}}
\subfigure[Constant transmit power tradeoff.]{
\includegraphics[width=0.75\columnwidth]{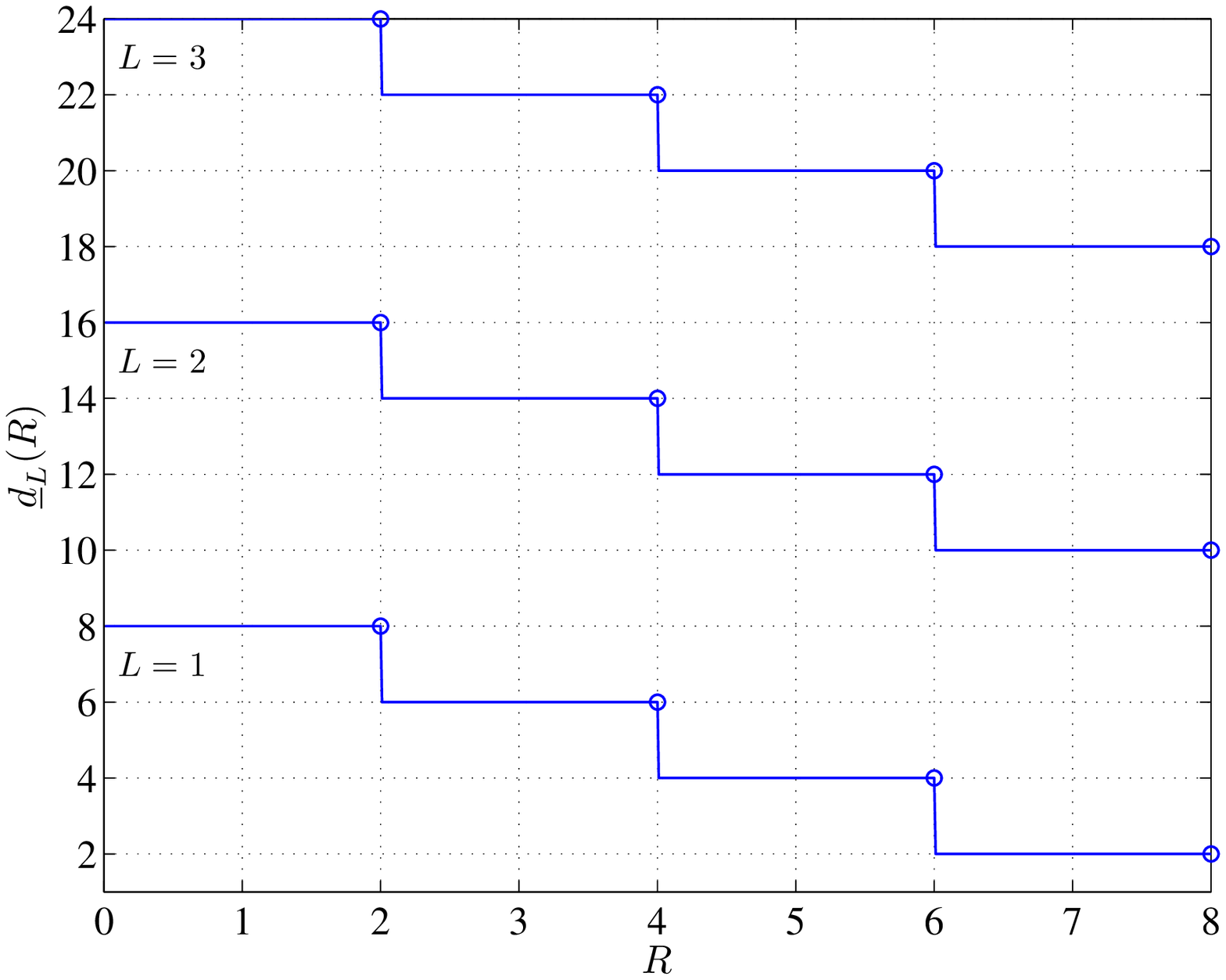}\label{fig:short_term}}
\caption{Optimal rate-diversity-delay tradeoff of ARQ transmission with long-term power constraint  (a) and constant power (b).  16-QAM is used over a  MIMO block-fading channel with $\notx=\norx=2, \noblock=2, \noround=1, 2, 3$.  Thick and thin lines in (a) represent the optimal tradeoffs $\diver_{\noround}(\rate)$ achieved by multi-bit feedback ($\nolevel \geq \left\lceil\noblock\rate/\consize\right\rceil+1$) and $\diveronebit_{\noround}(\rate)$ achieved by one-bit feedback ($\nolevel = 2$), respectively. Crosses and circles correspond to the rate points where the SNR-exponent of random codes does not achieve the optimal diversity.}
\label{fig:RDDT}
\end{center}
\end{figure*}
\newpage

\begin{figure}[tbf]
  \centering 
  \resizebox{1\textwidth}{!}{\input{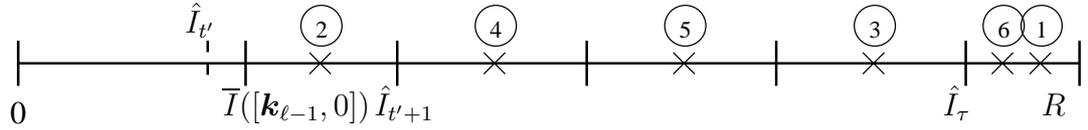}}
  \caption{An example of feedback threshold design ($\nolevel=12$).}
\label{fig:threshold_design}
\end{figure}

\begin{figure}[tbf]
\begin{center}
\includegraphics[width=1 \columnwidth]{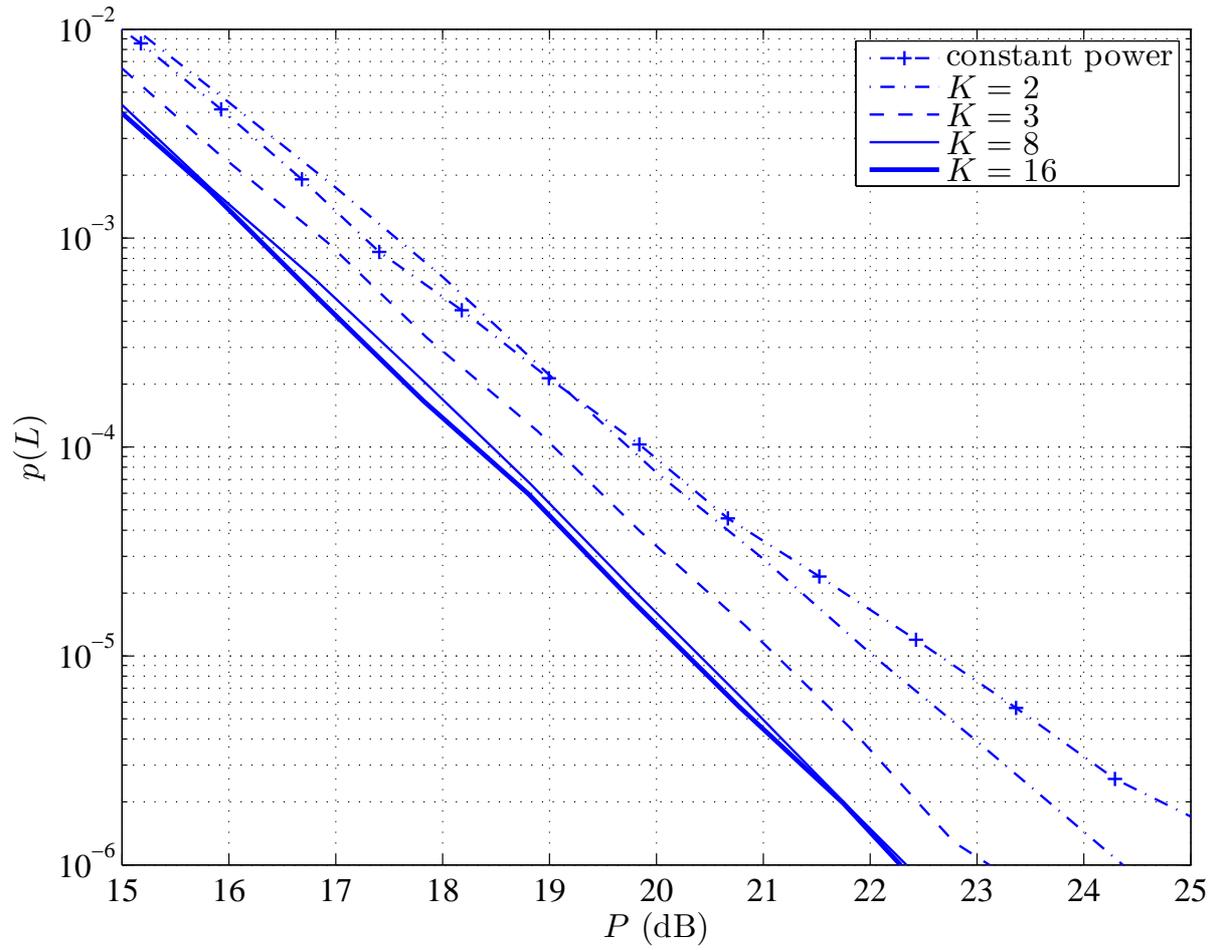}
\caption{Outage performance of  ARQ transmission schemes for a 16-QAM input block-fading channel with $\noround = 2, \notx=\norx=1, \noblock =2, \rate= 3.5$.}  
\label{fig:SISO}
\end{center}
\end{figure}
\newpage
\begin{figure}[tbf]
\begin{center}
\includegraphics[width=1 \columnwidth]{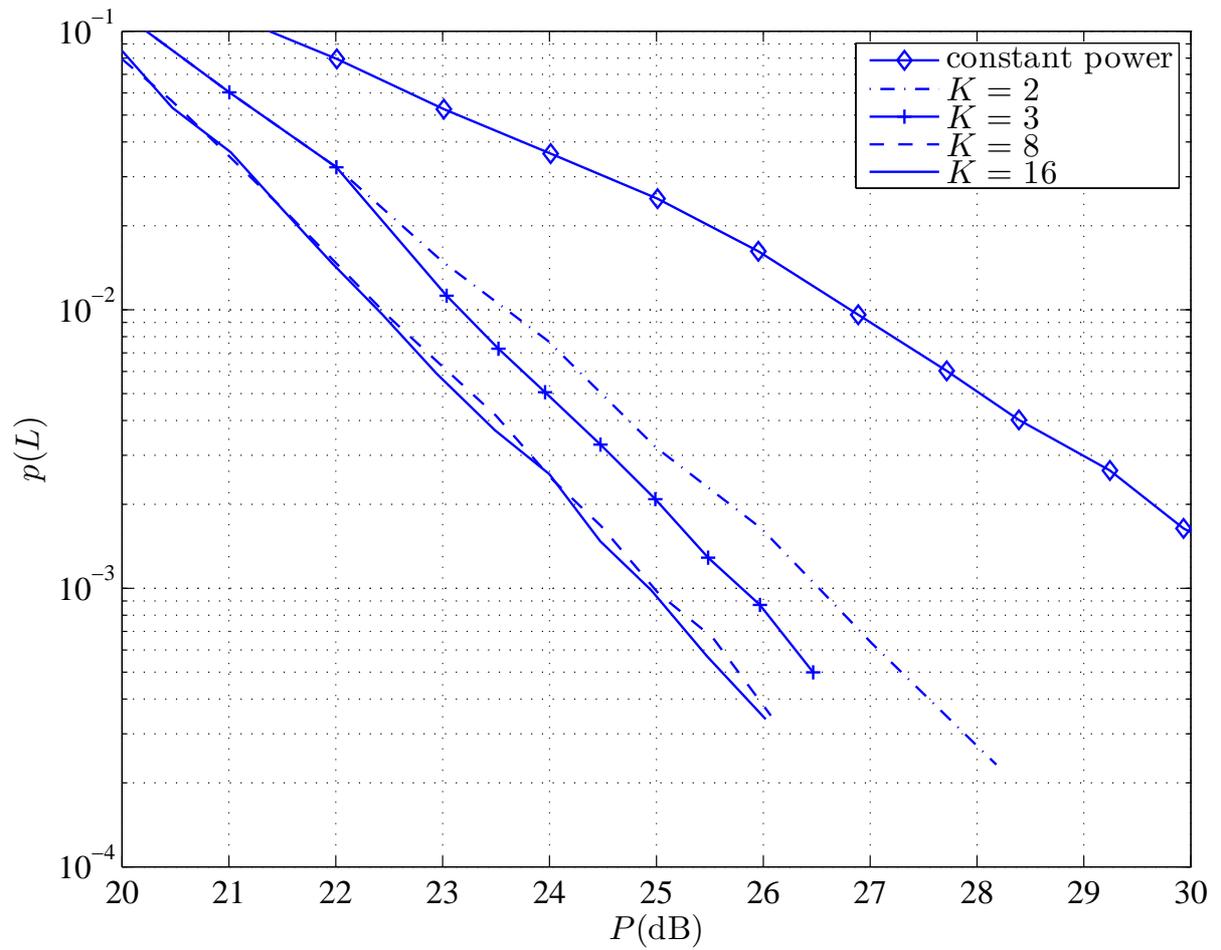}
\caption{Outage performance of  ARQ transmission  using the 16-QAM input constellation over the block-fading channel with $\noround = 2, \notx=2, \norx=1, \noblock =1, \rate= 7.5$.  Systems with constant transmit power, and systems employing power adaptation with $\nolevel=2, 3, 8, 16$ are considered.}  
\label{fig:MIMO_sim}
\end{center}
\end{figure}
\newpage
\begin{figure}[tbf]
\begin{center}
\includegraphics[width=1 \columnwidth]{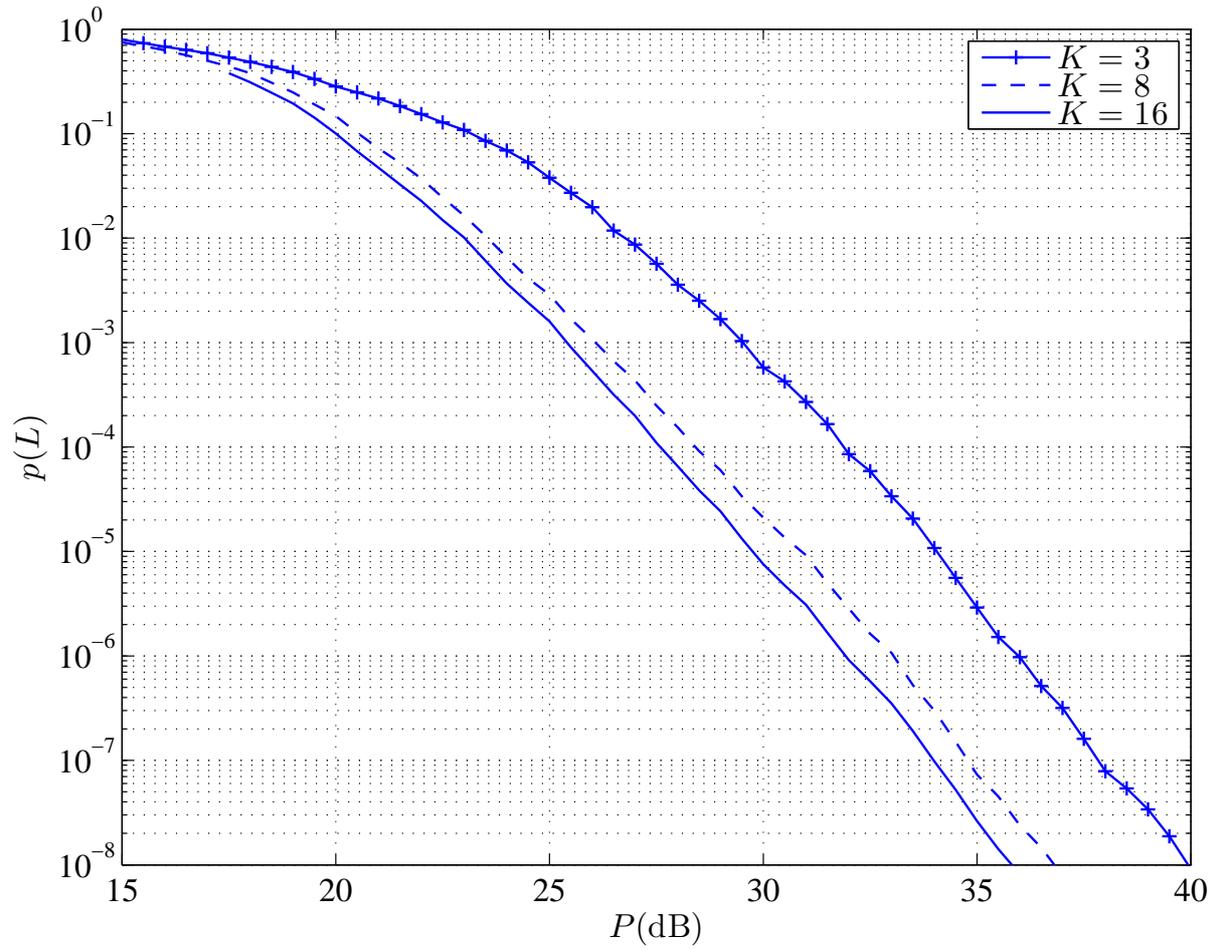}
\caption{Upper bound on outage performance of  ARQ transmission using 16-QAM input constellations over the  block-fading channel with $\noround = 2, \notx=2, \norx=1, \noblock =1, \rate= 7.5$ and $\nolevel=3, 8, 16$. }  
\label{fig:MIMO_bound}
\end{center}
\end{figure}

\end{document}